\documentclass[]{article}
\usepackage[margin=1in]{geometry}
\usepackage{amsmath}
\usepackage{ amssymb }
\usepackage{ amsthm }
\usepackage{ bbold }
\usepackage{newenviron}
\usepackage{booktabs} 
\usepackage{float}
\usepackage{threeparttable}
\usepackage{fouriernc} 
\newcommand{\Q}{\widehat{Q}}

\newcommand{\z}{{\mathbf{z}}}
\newcommand{\zbar}{\overline{\mathbf{z}}}
\usepackage{makecell}
\usepackage{multirow}
\newcommand{\aA}{\overline{a}_A}
\newcommand{\bB}{\overline{b}_B}
\newcommand{\haA}{\overline{ha}_A}
\newcommand{\hbB}{\overline{hb}_B}

\newcommand{\asA}{\overline{a^*}_A}
\newcommand{\bsB}{\overline{b^*}_B}
\newcommand{\zA}{\overline{\mathbf{z}}_A}
\newcommand{\zB}{\overline{\mathbf{z}}_B}
\newcommand{\azA}{\overline{a\mathbf{z}}_A}
\newcommand{\bzB}{\overline{b\mathbf{z}}_B}
\newcommand{\az}{\overline{a\mathbf{z}}}
\newcommand{\bz}{\overline{b\mathbf{z}}}
\newcommand{\aszA}{\overline{a^*\mathbf{z}}_A}
\newcommand{\bszB}{\overline{b^*\mathbf{z}}_B}
\newcommand{\asz}{\overline{a^*\mathbf{z}}}
\newcommand{\bsz}{\overline{b^*\mathbf{z}}}
\newcommand{\zz}{\overline{\mathbf{z}\mathbf{z}'}}
\newcommand{\zzA}{\overline{\mathbf{z}\mathbf{z}'}_A}
\newcommand{\zzB}{\overline{\mathbf{z}\mathbf{z}'}_B}
\newcommand{\pA}{{p}_A}
\newcommand{\pB}{{p}_B}
\newcommand{\E}{\mathbf{E}}
\newcommand{\N}{\widehat{N}}
\newcommand{\D}{\widehat{D}}
\newcommand{\zBar}{\overline{\mathbf{z}}}



\newcommand{\ateAdj}{\widehat{ATE}_{adj}}

\usepackage{hyperref}
\usepackage{lscape}
\usepackage{thmtools}
\newcommand\fixstatement[2][\proofname\space of]{%
	\ifcsname thmt@original@#2\endcsname
	\AtEndEnvironment{#2}{%
		\xdef\pat@label{\expandafter\expandafter\expandafter
			\@fourthoffour\csname thmt@original@#2\endcsname\space\@currentlabel}%
		\xdef\pat@proofof{\@nameuse{pat@proofof@#2}}%
	}%
	\else
	\AtEndEnvironment{#2}{%
		\xdef\pat@label{\expandafter\expandafter\expandafter
			\@fourthoffour\csname #1\endcsname\space\@currentlabel}%
		\xdef\pat@proofof{\@nameuse{pat@proofof@#2}}%
	}%
	\fi
	\@namedef{pat@proofof@#2}{#1}%
}

\declaretheorem[style=plain,name=Theorem, numberwithin=section]{theorem}

\newtheorem{corollary}{Corollary}[section]
\newtheorem{lemma}[theorem]{Lemma}
\newtheorem{assumption}{Assumption}

\newtheorem{remark}{Remark}

\usepackage{pgfplotstable}

\title{Exact Bias Correction for Linear Adjustment of Randomized Controlled Trials}
\author{Haoge Chang, Joel A. Middleton, and P. M. Aronow\footnote{We thank Donald Andrews, Winston Lin, Cyrus Samii and Jasjeet Sekhon for helpful comments and discussions.}}
\begin{document}

\maketitle
\begin{abstract}

In an influential critique of empirical practice, Freedman \cite{freedman2008A,freedman2008B} showed that the linear regression estimator was biased for the analysis of randomized controlled trials under the randomization model. Under Freedman's assumptions, we derive exact closed-form bias corrections for the linear regression estimator with and without treatment-by-covariate interactions. We show that the limiting distribution of the bias corrected estimator is identical to the uncorrected estimator, implying that the asymptotic gains from adjustment can be attained without introducing any risk of bias. Taken together with results from Lin \cite{lin2013agnostic}, our results show that Freedman's theoretical arguments against the use of regression adjustment can be completely resolved with minor modifications to practice. 

\end{abstract}

\section{Introduction}

Randomized Controlled Trials (RCTs) are popular in empirical economics \cite{angrist2008mostly,duflo2007using,glennerster2017practicalities,list2011field}. When estimating average treatment effects, adjustment for pretreatment covariates with linear regression is a commonly recommended practice because it can reduce the variability of estimates. However, adjusting for covariates remains somewhat controversial, in large part because of an influential critique from David Freedman \cite{freedman2008A,freedman2008B}. 

Freedman argued that randomization does not justify the use of linear regression for completely randomized experiments. Freedman's theoretical arguments relied on three results proven under the randomization-based \cite{splawa1990application,imbens2015causal}  inferential paradigm: 
\begin{enumerate}
\item asymptotically, the linear regression estimator can be inefficient relative to the unadjusted (difference-in-means) estimator if the design is imbalanced; \item the classical standard error for linear regression is inconsistent; 
\item the regression estimator has an $O_p(n^{-1})$ bias term.
\end{enumerate}
\noindent Freedman's arguments were influential among scholars across multiple disciplines (e.g., \cite{dunning2012natural},\cite{BenRecht}). Freedman's third argument garnered particular attention among social scientists. Notably, \cite{deaton2018understanding}'s critique of randomization in empirical economics argued that the bias introduced by regression undermines the gold standard argument for RCTs.

Scholars have worked to address Freedman's critiques and to understand the extent that they can and do matter for empirical practice in empirical economics. Using Freedman’s own framework, \cite{lin2013agnostic} showed that arguments 1 and 2 were resolved by  small modifications to practice. Freedman's efficiency result may be addressed through simple modifications to the regression specification, namely including treatment by covariate interactions \cite{blinder1973wage,oaxaca1973male}. Then it can be shown that the adjusted estimator is never less asymptotically efficient than the unadjusted estimator.  Regarding argument 2, Lin proves that robust standard errors (\cite{white1980heteroskedasticity,Hube67,eicker1967limit}, see also \cite{samii2012equivalencies}) are asymptotically conservative in Freedman's setting, guaranteeing the validity of large-sample inference. On argument 3, \cite{lin2013agnostic} (see also \cite{lin2013essays}) notes that the leading term of the bias is in fact estimable and can be shown to be small in a real-world empirical example. However, the small-sample bias of the regression estimator was not yet fully resolved. 

Since \cite{lin2013agnostic}, there have been notable papers that have proposed unbiased regression-type estimators for experimental data. \cite{miratrix2013adjusting} demonstrate that if the regression model is fully saturated (see also \cite{athey2017econometrics} and \cite{imbens2010better}), then the associated effect estimate is unbiased conditional on the event that treatment is not collinear with any covariate stratum. This approach cannot generally be used without coarsening continuous covariates. \cite{aronow2013class} proposes the use of auxiliary data, demonstrating that the suitable use of hold-out samples ensures the finite-sample unbiasedness of the associated regression estimator, but the paper does not consider efficiency properties. More recently, \cite{wu2018loop} extended \cite{aronow2013class} to propose an innovative but computationally expensive split-sample approach for completely randomized experiments.

The primary contribution of this paper is to resolve Freedman's third theoretical argument by proposing finite-sample-exact, closed-form bias corrections without adding any new assumptions. 
 Our idea builds on \cite{lin2013agnostic}’s proposal to estimate the leading term of the bias, but further develops a novel finite-sample exact bias correction encompassing all higher-order terms  \cite{freedman2008A}. We derive these bias corrections for both the noninteracted and interacted linear regression estimators. We prove that the estimators have the same limiting distributions as the non-bias-adjusted estimators, implying that they could replace existing estimators in instances where bias is a prevailing concern (e.g., trials that may be aggregated in meta-analysis). We further provide a numerical illustration demonstrating these properties.\\
\indent Finally, we remind the readers that the practice of debiasing estimators is not uncontroversial.  \cite{tibshirani1993introduction}\footnote{We thank Winston Lin for suggesting this reference.} has warned that the bias correction could be dangerous in practice due to its high variability\footnote{As will be shown in the simulations, when the performance is measured by the Root Mean Squared Error (RMSE), there is no clear dominance among the estimators: in some cases the RMSEs of the debiased estimators  are strictly smaller than those of other estimators, and in other cases larger.}. In real-world decision making processes, people may express different preferences for different statistical properties (i.e. unbiasedness or low Mean Squared Error)\footnote{See \cite{wu2018loop} for an anecdotal example of a policy-maker favoring unbiasedness.}. Our results shall imply that in large samples the additional variation caused by the bias correction is negligible, but for small samples, in some cases, we find it important to account for the sampling variability of the additional terms. To address this problem, we propose a simple modification to the standard error estimation procedure. Such modification, based on recomputing OLS residuals using the debiased estimators, is shown to work well on our simulated datasets. We make recommendations for practice in the Simulation section.\\
\indent The organization of the paper is as follows: Section 2 includes the model setup and assumptions; Section 3 considers a characterization of bias terms of the OLS estimators; Section 4 proposes the bias corrections; Section 5 includes simulation results with both simulated datasets and a real world dataset. In the appendix one can find the proofs for theorems in Section 3 and Section 4, and more simulation results. 

\section{Setting, Assumptions and Notations}
\indent We follow the setting of \cite{freedman2008A} and \cite{lin2013agnostic}, which assume a Neyman \cite{splawa1990application} model with covariates. 
There are $n$ subjects indexed by $i=1,...,n$. For each subject we observe an outcome $Y_i$ and a column vector of covariates $\textbf{z}_i=(z_{i1},z_{i2},...,z_{iK})\in\mathbb{R}^K$. The dimension of the covariates, $K$, does not change with the sample size.\\
\indent Each subject has two potential outcomes $a_i$ and $b_i$ (cf., the stable-unit-treatment-value-assumption \cite{RUBIN1990279}). We observe $Y_i=a_i$ if $i$ is chosen for treatment arm $A$ (treated group) and $Y_i=b_i$ if i is chosen for arm $B$ (control group). Let $T_i$ be the dummy variable for treatment arm $A$. Thus the observed outcome for $i$ is $Y_i=a_iT_i+b_i(1-T_i)$. \\
\indent The experiment is assumed to be completely randomized: $n_A$ out of $n$ subjects are 
randomly assigned to arm $A$ and the remaining $n_B=n-n_A$ subjects to arm $B$. Random assignment is the only source of randomness in the model. We do not assume a superpopulation: the $n$ subjects are the population of interest.\\
\indent We introduce some notation. let $n$ be the population size, $n_A$ and $n_B$ be the number of subjects in treatment arms $A$ and $B$, respectively. Let $[A]=\{i\mid T_i=1\}$ denote the set of individuals who are chosen for arm $A$ and similarly $[B]=\{i\mid T_i=0\}$.    Let $\Bar{x}=\frac{1}{N}\sum_{i=1}^N x_i$, $\Bar{x}_A=\frac{1}{n_A}\sum_{i\in[A]} x_i$ and $\Bar{x}_B=\frac{1}{n_B}\sum_{i\in[B]} x_i$ denote the population average, group $A$ average, and group $B$ average, respectively, of possibly a vector-valued variable $x$. The average treatment effect (ATE) can be written in this notation as:
\begin{equation*}
    \bar{a}-\bar{b}
\end{equation*}
and the difference-in-means estimator:
\begin{equation*}
    \aA - \bB
\end{equation*}
Simiarly we can write $\frac{1}{n}\sum_{i=1}^n \mathbf{z}_i\mathbf{z}_i'=\overline{\mathbf{z}\mathbf{z}'}$ for $\mathbf{z}_i\in\mathbb{R}^K$ and $\frac{1}{n}\sum_{i=1}^na_i\mathbf{z}_i=\az$ for $a_i\in\mathbb{R}$ and $\mathbb{z}\in\mathbb{R}^K$.\\
\indent We make the following assumptions throughout the paper, which are standard in the literature. (cf., \cite{freedman2008B,freedman2008A,lin2013agnostic}).
\begin{assumption}[Bounded fourth moments]
For all $n=1,2...$, and $x_i\in\{a_i,b_i,z_{i1},...,z_{iK}\}$,
\begin{align*}
\frac{1}{n}\sum_{i=1}^n x_i^4<L<\infty 
\end{align*}
where $L$ is a finite constant.
\end{assumption}
\begin{assumption}[Convergence of first and second moments]
For $x_i=[a_i,b_i,\z_i']\in\mathbb{R}^{2+K}$,
\begin{align*}
    \frac{1}{n}\sum_{i=1}^n x_ix_i' \to \mathbf{M}
\end{align*}
where $\mathbf{M}$ is a positive definite matrix with finite entries. Moreoever, $\frac{1}{n}\sum_{i=1}^n\z_i\z_i'$ converges to an invertible matrix.
\end{assumption}
\begin{assumption}[Group Sizes]
Let $p_{A,n}=\frac{n_A}{n}$ and $p_{B,n}=\frac{n-n_A}{n}$, the inclusion probability into the treatment arm A and arm B, respectively. We assume $p_{A,n}>0$ and $p_{B,n}>0$ for all $n$, and 
\begin{align*}
p_{A,n} \to p_A>0, \text{ as } n\to\infty
\end{align*}
\begin{align*}
p_{B,n} \to 1-p_A>0, \text{ as } n\to\infty
\end{align*}
\end{assumption}
\begin{assumption}[Centering]
$\bar{\z} = 0$
\end{assumption}
All assumptions are employed regularly in the literature. They are used to derive consistency and asymptotic normality for the estimator. Assumption 3 requires each arm receives a nontrivial fraction of subjects over the asymptotic sequence of the models. Assumption 4 is without loss of generality: in practice, researchers can just demean each covariate and apply our method.\\
\indent We remind the readers of the definitions of our two OLS regression adjusted ATE estimators. The first estimator comes from a noninteracted OLS regression where one regresses observed outcome $Y$ on the treatment indicator $T$ and demeaned pretreatment covariates $Z$. The coefficient estimate for $T$ is the \textit{noninteracted} OLS regression adjusted ATE estimator. The second estimator comes from a fully interacted OLS regression where one regresses observed outcome $Y$ on the treatment indicator $T$, demeaned pretreatment covariates $Z$, and interaction terms of the treatment indicators and demeaned pretreatment covariates. The coefficient estimate for $T$ is the \textit{interacted} OLS regression adjusted ATE estimator.\\
\indent Finally, we prepare some notation for the sections below. Let $a^*_i$ and $b^*_i$ be the centered potential outcomes, namely, $a^*_i=a_i-\bar{a}$ and $b^*_i=b_i-\bar{b}$. Let $\widehat{D}=\zz - \pA \zA \zA'- \pB \zB  \zB'$, $\N=\pA (\azA-\aA\zA) +\pB (\bzB-\bB \zB)$, $D=\zz$ and $N=\pA\az + \pB\bz$. With this notation the regression coefficients estimators of the pretreatment covariates in the noninteracted case can be written as $\Q={\D}^{-1}\N$, and the population coefficients $Q=D^{-1}N$. Denote the (rescaled) leverage of $i$th data point as
$h_i=z_i'D^{-1}z_i$.\\
\indent Further define $\D_A= \zzA - \zA\zA'$, $\N_A = \azA-\aA\zA$, $\D_B= \zzB - \zB\zB'$, $\N_B = \bzB-\bB\zB$, $N_A=\az$ and $N_B=\bz$. Thus the regression coefficients estimator of the pretreatment covariates in the interactive case can be written as $\Q_A = \D^{-1}_A\N_A$ and $\hat{Q}_B = \D^{-1}_B\N_B$. Their population counterparts are $Q_A=D^{-1}N_A$ and $Q_B=D^{-1}N_B$. 
\section{Bias Characterization}
As shown in \cite{lin2013agnostic}, the OLS regression adjusted ATE estimator can be written as:
\begin{align*}
\widehat{ATE}_{NI}= \aA  - \bB - \{(\zA- \overline{\z})' \Q -(\zB-\overline{\z})'  \Q \}
\end{align*}
for the noninteracted case and 
\begin{align*}
  \widehat{ATE}_{I} & = \aA  - \bB -\{ (\zA-\overline{\z})'\Q_A - (\overline{\z}_B -\overline{\z})'\Q_B\}
\end{align*}
for the interacted case, where $\Q$, $\Q_A$ and $\Q_B$ are the OLS coefficients in front of the covariates.\\

\indent A characterization of the bias terms is provided in this section. Note that both the noninteracted and interacted estimators can be written as sums of the difference-in-means estimator and a regression adjustment using group means and OLS coefficients. The bias comes from the regression adjustment terms, in particular from estimating the regression coefficients on the covariates. We shall characterize the bias terms of the coefficient estimator first. We start from the noninteracted case. Note from here on we shall assume for simplicity that all design matrices (i.e. $\hat{D}$ and $D$) are invertible. In case of noninvertible design matrices, our debiased procedure will still work after choosing a particular generalized inverse, and compute the ATE estimators according to the formulae above.  
\begin{theorem}
	The OLS coefficient vector for covariates can be written as:
	\begin{align*}
		\Q =&  Q+ \nu_1 +\nu_2 +\nu_3
	\end{align*}
	with 
	\begin{align*}
		\nu_1 &= D^{-1}\left(\pA \left(\aszA-\asz\right) +\pB \left(\bszB-\bsz\right) \right)= O_p(n^{-\frac{1}{2}}), \\
	\nu_2 &=  \left(\D^{-1} - D^{-1}\right)\N = O_p(n^{ -1}), \\
   \nu_3 &=  - D^{-1} \left(\pA \asA \zA + \pB \bsB \zB\right) =  O_p(n^{-1}). 
	\end{align*}
\end{theorem}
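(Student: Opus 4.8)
The plan is to start from the elementary matrix decomposition
\begin{align*}
\Q - Q = \D^{-1}\N - D^{-1}N = D^{-1}(\N - N) + (\D^{-1} - D^{-1})\N ,
\end{align*}
whose last term is exactly $\nu_2$; it then suffices to show $D^{-1}(\N - N) = \nu_1 + \nu_3$. First I would substitute the centered potential outcomes $a_i = a_i^* + \bar a$ and $b_i = b_i^* + \bar b$ into $\N$ and $N$. Assumption 4 ($\bar{\z}=0$) forces $\az = \asz$ and $\bz = \bsz$, and the cross terms carrying $\bar a$ and $\bar b$ cancel within each arm, leaving
\begin{align*}
\N - N = \pA\bigl(\aszA - \asz - \asA\zA\bigr) + \pB\bigl(\bszB - \bsz - \bsB\zB\bigr).
\end{align*}
Premultiplying by $D^{-1}$ and separating the mean-difference terms from the product terms reproduces $\nu_1 + \nu_3$ exactly. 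This step is purely algebraic and poses no difficulty; the resulting identity is exact, not merely asymptotic.

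The remaining task is to verify the three stochastic orders, for which the central input is a finite-population sampling bound: for any array $\{x_i\}_{i=1}^n$ with population mean $\bar x$ and uniformly bounded second moments (which Assumption 1 supplies, via Cauchy--Schwarz for product terms such as $a_i^*\z_i$), the arm average satisfies $\overline{x}_A - \bar x = O_p(n^{-1/2})$, because under complete randomization $\overline{x}_A$ is an unbiased sampling-without-replacement mean whose variance is $O(n^{-1})$, so Chebyshev's inequality supplies the rate. Applying this to the coordinates of $a_i^*\z_i$ and $b_i^*\z_i$ yields $\aszA - \asz = O_p(n^{-1/2})$ and $\bszB - \bsz = O_p(n^{-1/2})$; since $D = \zz$ converges to an invertible limit by Assumption 2 so that $D^{-1} = O(1)$, and $\pA,\pB = O(1)$, we obtain $\nu_1 = O_p(n^{-1/2})$.

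For $\nu_3$ the key observation is that both factors in each product are \emph{centered} sample means: $\asA$ averages $a_i^*$, which has population mean zero by construction, and $\zA$ averages $\z_i$, which has population mean zero by Assumption 4, so each is $O_p(n^{-1/2})$ and the products $\asA\zA$ and $\bsB\zB$ are $O_p(n^{-1})$, giving $\nu_3 = O_p(n^{-1})$; this is precisely where centering of the covariates is indispensable, since otherwise $\zA$ would be $O_p(1)$ and the term would be only $O_p(n^{-1/2})$. For $\nu_2$ I would write $\D - D = -\pA\zA\zA' - \pB\zB\zB' = O_p(n^{-1})$, again because $\zA,\zB = O_p(n^{-1/2})$, invoke the resolvent identity $\D^{-1} - D^{-1} = -\D^{-1}(\D - D)D^{-1}$, and combine with $\N = O_p(1)$ to conclude $\nu_2 = O_p(n^{-1})$.

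The main obstacle I anticipate is not the algebra but making these order statements fully rigorous: one must confirm that the relevant finite-population variances remain bounded along the asymptotic sequence (supplied by Assumptions 1--2) and that $\D$ is well behaved, namely that $\D = D + O_p(n^{-1})$ converges to the same invertible limit as $D$ and is therefore eventually bounded away from singularity, so that $\D^{-1} = O_p(1)$. Once the sampling-rate lemma and the convergence of $\D$ are in hand, each order claim follows by multiplying rates.
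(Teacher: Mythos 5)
Your proposal is correct and follows essentially the same route as the paper: the identity $\Q - Q = D^{-1}(\N - N) + (\D^{-1}-D^{-1})\N$ with the centering substitution $a_i = a_i^* + \bar a$, $b_i = b_i^* + \bar b$ splitting $D^{-1}(\N-N)$ into $\nu_1+\nu_3$, followed by the same resolvent identity for $\nu_2$ and the same $O_p(n^{-1/2})$ rates for arm means of centered quantities. The only difference is that you spell out the Chebyshev/finite-population-variance justification and the eventual invertibility of $\D$ slightly more explicitly than the paper does.
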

From the coefficient decomposition one can directly characterize the bias term of the ATE estimator. Note that the bias terms are of order $O_p(n^{-1})$.
\begin{corollary}
	The bias of the $\widehat{ATE}_{NI}$ estimator is:
	\begin{align*}
	\E[ \widehat{ATE}_{NI} -ATE ] =& \E \left[ (\zB-\zA)(\nu_1+\nu_2+\nu_3) \right].
	\end{align*}    
	Moreover, $(\zB-\zA)(\nu_1+\nu_2+\nu_3)=O_p(n^{-\frac{1}{2}})O_p(n^{-\frac{1}{2}})=O_p(n^{-1})$
\end{corollary}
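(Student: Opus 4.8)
The plan is to peel off the difference-in-means part, which is exactly unbiased under the Neyman model, and reduce the remaining bias to the regression-adjustment term, on which Theorem 1 can be brought to bear. First I would invoke the centering assumption ($\overline{\z}=\0$) to rewrite the noninteracted estimator as $\widehat{ATE}_{NI}=\aA-\bB-(\zA-\zB)'\Q$, so that $\widehat{ATE}_{NI}-ATE=\big(\aA-\bB-(\bar{a}-\bar{b})\big)-(\zA-\zB)'\Q$. Because the design is completely randomized, the standard sampling-without-replacement identities $\E[\aA]=\bar{a}$ and $\E[\bB]=\bar{b}$ hold (each group mean is unbiased for the corresponding population mean), so the difference-in-means part contributes nothing to the bias and only the adjustment term survives in expectation.

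Next I would substitute the decomposition $\Q=Q+\nu_1+\nu_2+\nu_3$ from Theorem 1. The population coefficient $Q$ is a nonrandom vector, and the same unbiasedness of group means gives $\E[\zA]=\E[\zB]=\overline{\z}=\0$; hence $\E[(\zA-\zB)'Q]=(\E[\zA]-\E[\zB])'Q=0$. This annihilates the first-order piece of the adjustment and leaves $\E[\widehat{ATE}_{NI}-ATE]=-\E[(\zA-\zB)'(\nu_1+\nu_2+\nu_3)]=\E[(\zB-\zA)(\nu_1+\nu_2+\nu_3)]$, which is the claimed identity. All expectations here are finite sums over the finitely many assignments and are well defined under the bounded-moment assumption, so no integrability concern arises.

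For the stochastic-order claim I would bound the two factors separately. Since $\overline{\z}=\0$ and the sampling variance of a group mean under complete randomization is $O(n^{-1})$ (using the bounded moments and $p_{A,n}\to p_A>0$), we have $\zA=O_p(n^{-1/2})$ and $\zB=O_p(n^{-1/2})$, hence $\zB-\zA=O_p(n^{-1/2})$. Theorem 1 supplies $\nu_1=O_p(n^{-1/2})$ and $\nu_2,\nu_3=O_p(n^{-1})$, so $\nu_1+\nu_2+\nu_3=O_p(n^{-1/2})$ with the rate pinned by $\nu_1$. Multiplying, $(\zB-\zA)\nu_1=O_p(n^{-1})$ dominates while $(\zB-\zA)\nu_2$ and $(\zB-\zA)\nu_3$ are $O_p(n^{-3/2})=o_p(n^{-1})$, so $(\zB-\zA)(\nu_1+\nu_2+\nu_3)=O_p(n^{-1})$, exactly the advertised factorization.

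I expect the only genuinely substantive inputs to be the two unbiasedness facts for sampling without replacement and the $O(n^{-1})$ group-mean variance bound used for the order estimate; both are standard in the Neyman-model literature and can be cited or derived in a line. Everything else is algebraic bookkeeping plus a direct appeal to Theorem 1. The one point worth stating carefully is that the cancellation of the $Q$ term is precisely what reduces the bias from the a priori $O_p(n^{-1/2})$ size of the adjustment term down to the $O_p(n^{-1})$ order claimed in the corollary.
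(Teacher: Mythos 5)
Your argument is correct and follows essentially the same route as the paper's proof: peel off the exactly unbiased difference-in-means part, use $\E[\zA]=\E[\zB]=\overline{\z}=\0$ to kill the nonrandom $Q$ term, and read off the remaining expectation from the decomposition in Theorem 3.1, with the order claim following from $\zB-\zA=O_p(n^{-1/2})$ and $\nu_1+\nu_2+\nu_3=O_p(n^{-1/2})$. The only difference is that you spell out the rate bookkeeping (the $\nu_1$ term dominating) slightly more explicitly than the paper does, which is harmless.
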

\indent Following the same steps as we did for the noninteracted estimator, we are able to derive analogous results for the interacted estimator.
\begin{theorem}
The OLS coefficient vectors for covariates can be written as
\begin{align*}
\hat{Q}_A = Q_A + \nu_{1A} + \nu_{2A}  \\
\hat{Q}_B = Q_B + \nu_{1B} + \nu_{2B}
\end{align*}
with 
\begin{align*}
    \nu_{1A} &= (\hat{D}^{-1}_A - D^{-1}) \hat{N}_A=O_p(n^{-\frac{1}{2}})\\
     \nu_{2A} &= D^{-1} (\hat{N}_A- N_A) = O_p(n^{-\frac{1}{2}}); \\
    \nu_{1B} &= (\hat{D}^{-1}_B - D^{-1}) \hat{N}_B=O_p(n^{-\frac{1}{2}}); \\
    \nu_{2B} & =D^{-1} (\hat{N}_B- N_B) = O_p(n^{-\frac{1}{2}})
\end{align*}
\end{theorem}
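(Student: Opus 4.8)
The plan is to treat the two claimed identities as pure algebra and then attach rates through a single finite-population concentration fact. For the $A$-coefficients, adding and subtracting $D^{-1}\hat{N}_A$ gives the exact decomposition
\begin{align*}
\hat{Q}_A - Q_A = \hat{D}_A^{-1}\hat{N}_A - D^{-1}N_A = (\hat{D}_A^{-1} - D^{-1})\hat{N}_A + D^{-1}(\hat{N}_A - N_A),
\end{align*}
so that $\nu_{1A}$ and $\nu_{2A}$ are read off immediately, and likewise for $B$. The point worth flagging at the outset is why all four terms are $O_p(n^{-1/2})$ here, in contrast to the noninteracted Theorem where the analogous $\nu_2,\nu_3$ were $O_p(n^{-1})$: in the interacted case $\hat{D}_A = \zzA - \zA\zA'$ is built only from arm-$A$ units, so its leading piece $\zzA$ is itself a random subsample mean fluctuating at rate $n^{-1/2}$ around $\zz = D$, whereas the noninteracted $\D$ uses the full-sample (hence deterministic) $\zz$ and fluctuates only through the $O_p(n^{-1})$ cross terms.

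The one probabilistic input I would isolate as a lemma is the following. Under complete randomization $[A]$ is a simple random sample of size $n_A$ drawn without replacement from $\{1,\dots,n\}$, so for any triangular array $\{w_i\}$ with $\frac1n\sum_i w_i^2$ bounded along the sequence, the arm mean $\bar{w}_A$ is exactly unbiased for $\bar{w}$ with sampling variance of order $S_w^2/n_A$, where $S_w^2$ is the finite-population variance of $\{w_i\}$; since $n_A \asymp n$ this is $O(n^{-1})$, whence $\bar{w}_A - \bar{w} = O_p(n^{-1/2})$ by Chebyshev. Assumptions 1 and 2 supply the required bounded second moments (bounded fourth moments give bounded $\frac1n\sum (a_i z_{ij})^2$ and $\frac1n\sum (z_{ij}z_{ik})^2$ by Cauchy--Schwarz). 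Applying the lemma componentwise yields $\zzA - D = O_p(n^{-1/2})$ and $\azA - N_A = O_p(n^{-1/2})$ (both exactly centered, since $\E[\zzA]=\zz=D$ and $\E[\azA]=\az=N_A$), and, using the centering Assumption 4 so that $\bar{\z}=0$, also $\zA = O_p(n^{-1/2})$ while $\aA = \bar a + O_p(n^{-1/2}) = O_p(1)$.

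Assembling these is then routine. From $\hat{D}_A - D = (\zzA - D) - \zA\zA' = O_p(n^{-1/2})$ and the invertibility of $D$ (Assumption 2), continuous mapping gives $\hat{D}_A^{-1}\arrowp D^{-1}$, so $\hat{D}_A^{-1} = O_p(1)$, and the perturbation identity $\hat{D}_A^{-1} - D^{-1} = -\hat{D}_A^{-1}(\hat{D}_A - D)D^{-1}$ upgrades this to $\hat{D}_A^{-1} - D^{-1} = O_p(n^{-1/2})$. Similarly $\hat{N}_A - N_A = (\azA - N_A) - \aA\zA = O_p(n^{-1/2})$, so $\hat{N}_A = O_p(1)$. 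Therefore $\nu_{1A} = (\hat{D}_A^{-1}-D^{-1})\hat{N}_A = O_p(n^{-1/2})\,O_p(1) = O_p(n^{-1/2})$ and $\nu_{2A} = D^{-1}(\hat{N}_A - N_A) = O_p(n^{-1/2})$; the $B$-terms follow by the identical argument with $a,A$ replaced by $b,B$. The main obstacle --- really the only nontrivial content --- is the finite-population variance bound of the lemma: one must verify that the without-replacement sampling variances $S_w^2$ of the relevant products stay bounded uniformly along the asymptotic sequence, which is exactly what the fourth-moment bound of Assumption 1 guarantees. Everything after that is the add-and-subtract algebra together with the standard matrix-inverse expansion.
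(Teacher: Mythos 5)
Your proposal is correct and follows essentially the same route as the paper's (much terser) proof: the same add-and-subtract algebraic decomposition, the same matrix-inverse perturbation identity $\hat{D}_A^{-1}-D^{-1}=\hat{D}_A^{-1}(D-\hat{D}_A)D^{-1}$, and the same $O_p(n^{-1/2})$ rates, which the paper attributes simply to "the assumptions" and you justify explicitly via finite-population sampling variances under simple random sampling without replacement. No gaps.
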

\begin{corollary}
    The bias of the $\widehat{ATE}_{I}$ estimator is:
\begin{align*}
    E[\widehat{ATE}_I-ATE] & = E[\bar{z}_B(\nu_{1B} + \nu_{2B})] - E[\bar{z}_A(\nu_{1A} + \nu_{2A})] 
\end{align*}
	Moreover, $\bar{z}_B(\nu_{1B} + \nu_{2B})=O_p(n^{-\frac{1}{2}})O_p(n^{-\frac{1}{2}})=O_p(n^{-1})$ and $\bar{z}_A(\nu_{1A} + \nu_{2A})=O_p(n^{-\frac{1}{2}})O_p(n^{-\frac{1}{2}})=O_p(n^{-1})$.
\end{corollary}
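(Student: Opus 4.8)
The plan is to substitute the decomposition of the second theorem directly into the interacted estimator and then take expectations, exploiting the fact that under complete randomization the group averages are design-unbiased for the corresponding population averages. First I would invoke Assumption 4 ($\bar{\z}=0$) to collapse the interacted estimator to
\begin{align*}
\widehat{ATE}_I = \aA - \bB - \zA' \hat{Q}_A + \zB' \hat{Q}_B,
\end{align*}
and then plug in $\hat{Q}_A = Q_A + \nu_{1A} + \nu_{2A}$ and $\hat{Q}_B = Q_B + \nu_{1B} + \nu_{2B}$. Subtracting $ATE = \bar{a}-\bar{b}$ splits the error into a leading piece $\aA - \bB - \zA'Q_A + \zB'Q_B - (\bar{a}-\bar{b})$ and a remainder $-\zA'(\nu_{1A}+\nu_{2A}) + \zB'(\nu_{1B}+\nu_{2B})$.

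Second, I would take expectations term by term. Under complete randomization each arm is a simple random sample, so the group means are unbiased: $\E[\aA]=\bar{a}$, $\E[\bB]=\bar{b}$, and $\E[\zA]=\E[\zB]=\bar{\z}=0$. Because $Q_A$ and $Q_B$ are nonrandom population quantities, the cross terms satisfy $\E[\zA'Q_A]=\E[\zA]'Q_A=0$ and likewise for $Q_B$. Hence the entire leading piece has zero expectation, and the bias reduces to $\E[\zB'(\nu_{1B}+\nu_{2B})] - \E[\zA'(\nu_{1A}+\nu_{2A})]$, which is the claimed identity.

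Third, for the order statements I would bound each factor of the inner products separately. The second theorem already supplies $\nu_{1A}+\nu_{2A}=O_p(n^{-1/2})$ and $\nu_{1B}+\nu_{2B}=O_p(n^{-1/2})$. For the covariate means, centering gives $\E[\zA]=\bar{\z}=0$, and a standard finite-population variance calculation—valid under Assumptions 1 and 2, which control the second and fourth moments of the covariates—yields $\var(\zA)=O(n^{-1})$, so $\zA=O_p(n^{-1/2})$ by Chebyshev, and similarly for $\zB$. Multiplying the two $O_p(n^{-1/2})$ factors gives the stated $O_p(n^{-1})$ rate.

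The main obstacle here is bookkeeping rather than a genuine difficulty: one must be careful that the cross terms between the mean-zero group covariate means and the deterministic population coefficients vanish in expectation, and that the $O_p(n^{-1/2})$ rate for $\zA,\zB$ really does follow from the moment assumptions through the complete-randomization sampling variance. Once these are verified, the argument is a near-verbatim analog of the noninteracted corollary, the only structural difference being that the interacted estimator carries separate coefficient vectors $\hat{Q}_A$ and $\hat{Q}_B$ for the two arms, which is exactly why the bias splits into a separate $A$-term and $B$-term.
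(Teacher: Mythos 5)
Your argument is correct and matches the paper's own route: the paper proves this corollary by the same substitution-and-expectation argument used for the noninteracted case, relying on the design-unbiasedness of $\aA$, $\bB$, $\zA$, $\zB$ and on $\zA,\zB=O_p(n^{-1/2})$ exactly as you do. No gaps to report.
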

 Note that this result implies that the bias terms of the interactive ATE estimator are also of order $O_p(n^{-1})$.
 
\section{Bias Corrections for Regression Components}

Having established the decomposition, we now derive estimators of each bias term for use as bias corrections. We show that these bias estimates are (i) exactly unbiased and (ii) have estimation error of $O_p(n^{-1})$. It follows that use of this bias correction with an adjusted estimator yields a finite-sample unbiased estimator with the limit distribution of the adjusted estimator. We remind the readers that $h_i=z_i'D^{-1}z_i$, the (rescaled) leverage of $i$th data point as defined in Section 2.


We again begin with the noninteracted case.
\begin{theorem}
    An unbiased estimator for the bias in the noninteracted case is:
\begin{align*}
  \widehat{Bias}_{NI}= &\frac{1}{n}\frac{n_B}{n_B-1}\left( \hbB-\overline{h}_B \bB  \right)- \frac{1}{n}\frac{n_A}{n_A-1} \left(\haA -\overline{h}_A \aA\right) 
		+ (\zB-\zA)' \left(\D^{-1} - D^{-1}\right)\N +\\
  &\frac{C_{A,NI}}{n_A}\sum_{i\in [A]}(z_i-\zA)'D^{-1}(z_i-\zA)(a_i-\aA)-
  \frac{C_{B,NI}}{n_B}\sum_{i\in [B]}(z_i-\zB)'D^{-1}(z_i-\zB)(b_i-\bB)
\end{align*}   
$C_{A,NI}$ and $C_{B,NI}$ are two constants depending on $n$, $n_A$ and $n_B$. Their exact formulas are given in the appendix.
Moreover $\widehat{Bias}_{NI}=O_p(n^{-1})$
\end{theorem}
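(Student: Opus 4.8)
The plan is to establish the two assertions in turn: exact unbiasedness, $\E[\widehat{Bias}_{NI}] = \E[(\zB - \zA)'(\nu_1 + \nu_2 + \nu_3)]$, which is the bias given by the preceding bias corollary, and then the order statement $\widehat{Bias}_{NI} = O_p(n^{-1})$. The observation that drives everything is that the covariates are recorded for every unit, so $D = \zz$, the matrix $\D$, and each leverage $h_i = z_i'D^{-1}z_i$ are \emph{nonrandom}; the only randomness is the partition of the units into $[A]$ and $[B]$. I would treat $D^{-1}$ and the $h_i$ as constants and reduce each expectation to a finite-population sampling moment of group means under sampling without replacement.

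The $\nu_2$ contribution is free. Because $\D$ and $\N$ are computable from the data, the middle summand of $\widehat{Bias}_{NI}$ is literally $(\zB - \zA)'\nu_2$, so its expectation equals the $\nu_2$ part of the bias by construction. It remains to match the other four summands to $\E[(\zB - \zA)'(\nu_1 + \nu_3)]$. I would first combine $\nu_1$ and $\nu_3$: on the group-$A$ block their sum equals $D^{-1}\pA(\frac{1}{n_A}\sum_{i\in[A]}(a_i - \aA)(z_i - \zA) - \asz)$, i.e. a within-group sample covariance of $a$ and $z$ minus the population vector $\asz = \az$ (using $\bar{\z} = 0$), with the symmetric expression on block $B$. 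Inside this covariance, expanding $(\zB - \zA)'D^{-1}(z_i - \zA)$ isolates a diagonal contribution — the $j=i$ term of $\zA'D^{-1}z_i = \frac{1}{n_A}\sum_{j\in[A]}z_j'D^{-1}z_i$, which yields $\frac{1}{n_A}h_i$ — alongside remaining terms that are genuine quadratic forms in $D^{-1}$.

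I would then take expectations using the unbiasedness identities for simple random sampling, $\E[\bar{x}_A] = \bar{x}$ and $\E[\frac{n_A}{n_A-1}(\overline{xy}_A - \bar{x}_A\bar{y}_A)] = \frac{n}{n-1}(\overline{xy} - \bar{x}\,\bar{y})$, together with the corresponding third-order identity governing $\E[\frac{1}{n_A}\sum_{i\in[A]}(z_i-\zA)'D^{-1}(z_i-\zA)(a_i-\aA)]$. This reduces $\E[(\zB - \zA)'(\nu_1 + \nu_3)]$ to population second- and third-order moments. The second-order (covariance) moments are estimated unbiasedly by the leverage-covariance summands $\frac{1}{n}\frac{n_A}{n_A-1}(\haA - \overline{h}_A\aA)$ and $\frac{1}{n}\frac{n_B}{n_B-1}(\hbB - \overline{h}_B\bB)$; the third-order moments are estimated by the quadratic-form summands $\frac{C_{A,NI}}{n_A}\sum_{i\in[A]}(z_i-\zA)'D^{-1}(z_i-\zA)(a_i-\aA)$ and its $B$-analog, with $C_{A,NI}$ and $C_{B,NI}$ solved so that the finite-population correction factors cancel and unbiasedness holds \emph{exactly} for every $n$, not just in the limit.

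For the rate, I would inspect the five summands directly. Each of the four correction terms carries an explicit $\frac1n$, or an $O(n^{-1})$ constant $C_{A,NI}, C_{B,NI}$, in front of a sample covariance or average that is $O_p(1)$, since $h_i, z_i, a_i, b_i$ have bounded moments and $D^{-1} = O(1)$ by the convergence assumption; hence each is $O_p(n^{-1})$. The middle term is $(\zB - \zA)'\nu_2 = O_p(n^{-1/2})\,O_p(n^{-1}) = O_p(n^{-3/2})$ by the $\nu_2 = O_p(n^{-1})$ bound of the coefficient decomposition, so it is dominated, and $\widehat{Bias}_{NI} = O_p(n^{-1})$ follows. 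I expect the main obstacle to be the exact finite-population bookkeeping of the third-order $\nu_3$-type expectation: tracking precisely which quadratic-form remainders survive and solving for $C_{A,NI}$ and $C_{B,NI}$ so that the correction is unbiased in finite samples. The diagonal/off-diagonal separation of the leverage terms is the delicate step, whereas the $\nu_2$ and rate arguments are routine.
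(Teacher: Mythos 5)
Your proposal is correct and follows essentially the same route as the paper: treat $D^{-1}$ and the leverages as nonrandom, subtract the directly observable $(\zB-\zA)'\nu_2$ term exactly, reduce $\E[(\zB-\zA)'\nu_1]$ and $\E[(\zB-\zA)'\nu_3]$ to population second- and third-order moments via finite-population sampling identities for group means under complete randomization, and then estimate those population moments exactly unbiasedly by within-group statistics with correction constants (the paper's Lemmas A.1--A.2 and the constants $N_{AAA}, N_{AAB}, N_{Adj,A}, N_{Adj,B}$ are precisely the "bookkeeping" you anticipate). Your only cosmetic deviation is merging $\nu_1+\nu_3$ into a single within-group covariance before taking expectations, whereas the paper treats them separately; the substance and the resulting estimator are the same, and your sharper $O_p(n^{-3/2})$ bound on the $\nu_2$ summand is also fine.
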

\begin{corollary}
    The following estimator is unbiased for estimating the ATE:
    $$\widehat{ATE}_{NI,Debias}=\widehat{ATE}_{NI} -  \widehat{Bias}_{NI}$$
\end{corollary}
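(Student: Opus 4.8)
The plan is to obtain this as an immediate consequence of Theorem~4.1 combined with the linearity of expectation, since all of the substantive work has already been carried out in constructing $\widehat{Bias}_{NI}$ and proving that it is \emph{exactly} unbiased for the bias of $\widehat{ATE}_{NI}$. First I would emphasize that under the randomization model the treatment assignment is the only source of randomness, so the target $ATE = \bar{a} - \bar{b}$ is a fixed functional of the non-random finite population of potential outcomes; consequently $\E[ATE] = ATE$, and for any estimator the expected estimation error coincides with its bias. This is the only modeling fact the argument uses.

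Next I would take expectations of the defining identity $\widehat{ATE}_{NI,Debias} = \widehat{ATE}_{NI} - \widehat{Bias}_{NI}$ and apply linearity to split it:
\begin{align*}
\E[\widehat{ATE}_{NI,Debias} - ATE] = \E[\widehat{ATE}_{NI} - ATE] - \E[\widehat{Bias}_{NI}].
\end{align*}
By Corollary~3.1 the first term on the right is the exact finite-sample bias $\E[(\zB - \zA)(\nu_1 + \nu_2 + \nu_3)]$ of the uncorrected estimator, and by Theorem~4.1 the estimator $\widehat{Bias}_{NI}$ is unbiased for precisely that quantity, so $\E[\widehat{Bias}_{NI}]$ equals the same bias. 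The two terms therefore cancel, giving $\E[\widehat{ATE}_{NI,Debias} - ATE] = 0$, which is the claim.

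The closest thing to an obstacle here is not computational but definitional: one must confirm that the ``bias'' estimated in Theorem~4.1 is the \emph{entire} finite-sample bias of $\widehat{ATE}_{NI}$ (the full $\nu_1 + \nu_2 + \nu_3$ contribution from Corollary~3.1), and not merely its leading $O(n^{-1})$ term as in the estimable-leading-term approach of earlier work. Because Theorem~4.1 asserts unbiasedness for the bias with no remainder, the cancellation above is exact in finite samples, and no moment or regularity conditions beyond those already invoked are required. I would also note that the order statement $\widehat{Bias}_{NI} = O_p(n^{-1})$ from Theorem~4.1 plays no role in establishing unbiasedness; it is instead what will later guarantee that $\widehat{ATE}_{NI,Debias}$ inherits the limiting distribution of $\widehat{ATE}_{NI}$, so that the bias correction is asymptotically free.
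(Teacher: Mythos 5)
Your proposal is correct and matches the paper's (implicit) argument: the corollary is an immediate consequence of Corollary~3.1 and Theorem~4.1 via linearity of expectation, with the two bias terms cancelling exactly in finite samples. Your remark that the $O_p(n^{-1})$ order statement is irrelevant to unbiasedness and only matters for the asymptotic equivalence is also consistent with how the paper uses it (in Remark~1).
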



The results are derived analogously in the interacted case.

\begin{theorem}
    An unbiased estimator for the bias in the interacted case is:
\begin{align*}
  \widehat{Bias}_I= &\frac{1}{n} \frac{n_A}{n_B-1} \left( \hbB-\overline{h}_B \bB  \right) + \bar{z}_B'(\hat{D}_B^{-1}-D^{-1})\hat{N}_B - \frac{C_{B,I}}{n_B} \sum_{i\in [B]}(z_i-\zB)'D^{-1}(z_i-\zB)(b_i-\bB)- \\
   & \frac{1}{n} \frac{n_B}{n_A-1} \left(\haA -\overline{h}_A \aA\right)  - \bar{z}_A'(\hat{D}_A^{-1}-D^{-1})\hat{N}_A + \frac{C_{A,I}}{n_A}\sum_{i\in [A]}(z_i-\zA)'D^{-1}(z_i-\zA)(a_i-\aA). 
\end{align*}   
$C_{A,I}$ and $C_{B,I}$ are two constants depending on $n$, $n_A$ and $n_B$. Their exact formulas are given in the appendix.
Moreover $\widehat{Bias}_I=O_p(n^{-1})$. 
\end{theorem}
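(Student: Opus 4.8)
The goal is to prove $\E[\widehat{Bias}_I]=\E[\widehat{ATE}_I-ATE]$. By the preceding corollary the target equals $\E[\bar z_B'(\nu_{1B}+\nu_{2B})]-\E[\bar z_A'(\nu_{1A}+\nu_{2A})]$, and the proposed $\widehat{Bias}_I$ is organized as a group-$B$ block minus a group-$A$ block in which the roles of $(a,A)$ and $(b,B)$ and the sizes $n_A,n_B$ are interchanged. Since complete randomization is symmetric under this relabeling, it suffices to show that the group-$B$ block of $\widehat{Bias}_I$ is unbiased for $\E[\bar z_B'(\nu_{1B}+\nu_{2B})]$; the group-$A$ statement then follows verbatim with a sign flip. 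I would therefore treat the $B$ block throughout and invoke symmetry at the end, exactly paralleling the noninteracted derivation.

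First I would dispose of the $\nu_{1B}$ contribution. Because $\nu_{1B}=(\hat D_B^{-1}-D^{-1})\hat N_B$ is a function of observed data only (the group-$B$ moments $\hat D_B,\hat N_B$ together with the fully observed $D$), the summand $\bar z_B'(\hat D_B^{-1}-D^{-1})\hat N_B$ equals $\bar z_B'\nu_{1B}$ and is hence its own exactly unbiased estimator of $\E[\bar z_B'\nu_{1B}]$; it enters $\widehat{Bias}_I$ unchanged. It then remains to show that the two remaining $B$-terms, $\frac{1}{n}\frac{n_A}{n_B-1}(\hbB-\overline{h}_B\bB)-\frac{C_{B,I}}{n_B}\sum_{i\in[B]}(z_i-\bar z_B)'D^{-1}(z_i-\bar z_B)(b_i-\bB)$, together form an unbiased estimator of $\E[\bar z_B'\nu_{2B}]$.

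For this I would substitute $\nu_{2B}=D^{-1}(\hat N_B-N_B)$ with $\hat N_B=\bzB-\bB\,\bar z_B$ and the unobservable population vector $N_B=\bz$. The centering assumption $\overline{\mathbf z}=\0$ makes the fixed-vector piece vanish in expectation, $\E[\bar z_B'D^{-1}N_B]=\overline{\mathbf z}'D^{-1}N_B=0$, so the target reduces to $\E[\bar z_B'\nu_{2B}]=\E[\bar z_B'D^{-1}\bzB]-\E[\bB\,\bar z_B'D^{-1}\bar z_B]$. Writing each group-$B$ mean as a sum over $[B]$ and separating diagonal from off-diagonal indices, the diagonal contributions collapse to the rescaled leverages $h_i=z_i'D^{-1}z_i$ (so that $\frac{1}{n_B}(\hbB-\overline{h}_B\bB)$ appears), while the off-diagonal contributions produce population cross terms. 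Evaluating these sums exactly with the first-, second-, and third-order inclusion probabilities of complete randomization expresses $\E[\bar z_B'\nu_{2B}]$ as an explicit linear combination of the population quantities $\tfrac1n\sum_i h_ib_i$, $\tfrac1n\sum_i h_i$, $\bar b$, and a leverage cross term. Replacing each population quantity by its unbiased group-$B$ counterpart reproduces the posited statistic; the prefactor $\frac{1}{n}\frac{n_A}{n_B-1}$ and the constant $C_{B,I}$ are then chosen so that the residual second- and third-order sampling corrections cancel, giving exact (not merely asymptotic) unbiasedness.

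The main obstacle is this last step: computing the third-order finite-population moment of $\bB\,\bar z_B'D^{-1}\bar z_B$ and solving for the constant $C_{B,I}$ (and symmetrically $C_{A,I}$) so that the centered-leverage statistic $\frac{1}{n_B}\sum_{i\in[B]}(z_i-\bar z_B)'D^{-1}(z_i-\bar z_B)(b_i-\bB)$ exactly removes the bias left over by the leverage-covariance term. Once unbiasedness is in hand, the order claim is immediate: $\bar z_B'\nu_{1B}$ and $\bar z_A'\nu_{1A}$ are products of $O_p(n^{-1/2})$ factors by the preceding theorem and the centering assumption, whereas each leverage block carries an explicit $O(n^{-1})$ scalar (the prefactor, and $C_{B,I}=O(n^{-1})$) multiplying an $O_p(1)$ finite-population covariance, so summing the six pieces yields $\widehat{Bias}_I=O_p(n^{-1})$.
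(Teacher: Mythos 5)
Your outline reproduces the paper's proof of this theorem essentially step for step: the same split of $\E[\zB'(\nu_{1B}+\nu_{2B})]$ into the directly observable term $\zB'(\hat{D}_B^{-1}-D^{-1})\hat{N}_B$, a leverage term handled by second-order inclusion probabilities (yielding the $\frac{1}{n}\frac{n_A}{n_B-1}(\hbB-\overline{h}_B\bB)$ estimator), and the cubic term $\E[\bar{b}^*_B\,\zB'D^{-1}\zB]$ handled by third-order inclusion probabilities with a finite-population correction constant, followed by the symmetric argument for arm $A$. The third-moment computation you flag as the main obstacle is exactly what the paper's Lemmas A.1 and A.2 supply (giving $C_{B,I}=N_{BBB}N_{Adj,B}$), so your route does not diverge from theirs.
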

\begin{corollary}
    The following estimator is unbiased for estimating the ATE:
    $$\widehat{ATE}_{I,Debias}=\widehat{ATE}_{I} -  \widehat{Bias}_I$$
\end{corollary}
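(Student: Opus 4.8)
The plan is to verify directly that $\E[\widehat{Bias}_I]$ equals the bias $\E[\zB'(\nu_{1B}+\nu_{2B})]-\E[\zA'(\nu_{1A}+\nu_{2A})]$ obtained in the corollary to the interacted coefficient decomposition. Since the interacted estimator treats the two arms symmetrically, it suffices to match the three $B$-indexed terms of $\widehat{Bias}_I$ to $\E[\zB'(\nu_{1B}+\nu_{2B})]$; the $A$-terms then follow from the relabeling $A\leftrightarrow B$, $a\leftrightarrow b$ together with the overall sign change. The $\nu_{1B}$ contribution requires no estimation at all: because $\nu_{1B}=(\D_B^{-1}-D^{-1})\N_B$, the middle term $\bar{z}_B'(\D_B^{-1}-D^{-1})\N_B$ of $\widehat{Bias}_I$ is literally the random variable $\zB'\nu_{1B}$ and is thus its own unbiased estimator. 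The entire burden therefore falls on matching the remaining two $B$-terms to $\E[\zB'\nu_{2B}]=\E[\zB'D^{-1}(\N_B-N_B)]$.

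I would first compute this target. Expanding $\N_B-N_B=\bzB-\bB\zB-\bz$ splits $\zB'D^{-1}(\N_B-N_B)$ into a double sum $\zB'D^{-1}\bzB$, a triple sum $\bB\,\zB'D^{-1}\zB$, and a single sum $\zB'D^{-1}\bz$. Writing $g_{ij}=z_i'D^{-1}z_j$ (so $g_{ii}=h_i$) and taking expectations with the first-, second-, and third-order inclusion probabilities of simple random sampling ($n_B/n$, $n_B(n_B-1)/[n(n-1)]$, $n_B(n_B-1)(n_B-2)/[n(n-1)(n-2)]$), the centering assumption does the heavy lifting: $\overline{\z}=0$ forces $\sum_i g_{ij}=(n\overline{\z})'D^{-1}z_j=0$ for every $j$, which annihilates the single-sum term in expectation and collapses the off-diagonal parts of the other two sums. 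Collecting terms, I expect the result to reduce to a multiple of the finite-population covariance,
\[ \E[\zB'\nu_{2B}]=\tfrac{n_A\,n(n_B-1)}{n_B^2(n-1)(n-2)}\big(\overline{hb}-\overline{h}\,\overline{b}\big), \]
with $\overline{hb}=\tfrac1n\sum_i h_i b_i$, $\overline{h}=\tfrac1n\sum_i h_i$, $\overline{b}=\tfrac1n\sum_i b_i$.

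On the estimator side, the covariance term obeys the classical identity $\E[\hbB-\overline{h}_B\bB]=\tfrac{n(n_B-1)}{n_B(n-1)}(\overline{hb}-\overline{h}\,\overline{b})$, so $\tfrac1n\tfrac{n_A}{n_B-1}(\hbB-\overline{h}_B\bB)$ has expectation $\tfrac{n_A}{n_B(n-1)}(\overline{hb}-\overline{h}\,\overline{b})$. For the leverage-weighted term I would expand the within-arm-centered quadratic form $(z_i-\zB)'D^{-1}(z_i-\zB)=h_i-2z_i'D^{-1}\zB+\zB'D^{-1}\zB$, use $\sum_{i\in[B]}(b_i-\bB)=0$ to discard its constant piece, and again invoke $\sum_i g_{ij}=0$; its expectation likewise reduces to a multiple of $\overline{hb}-\overline{h}\,\overline{b}$, namely $\tfrac{n^2(n_B-1)(n_B-2)}{n_B^2(n-1)(n-2)}$ times it. Equating the sum of the two estimator expectations to $\E[\zB'\nu_{2B}]$ then pins down the scalar, $C_{B,I}=\tfrac{n_A(n-2n_B)}{n^2(n_B-1)(n_B-2)}$ (and $C_{A,I}$ by symmetry).

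The main obstacle is twofold. First comes the third-order inclusion-probability bookkeeping for the triple sum $\bB\,\zB'D^{-1}\zB$ and the analogous triple sum hidden inside the leverage-weighted estimator. More essential, though, is the structural coincidence that makes a single scalar suffice: both the target bias and the leverage-weighted term have expectation proportional to the \emph{same} covariance $\overline{hb}-\overline{h}\,\overline{b}$, so that one constant $C_{B,I}$ can match what are a priori two independent coefficients (one on $\overline{hb}$, one on $\overline{h}\,\overline{b}$). This proportionality is not automatic; it is forced by the centering identity $\sum_i g_{ij}=0$, which makes the $\overline{hb}$ and $\overline{h}\,\overline{b}$ contributions appear in equal-and-opposite pairs, and checking this cancellation is the crux of the argument. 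The order statement $\widehat{Bias}_I=O_p(n^{-1})$ is then immediate: the plug-in term equals $\zB'\nu_{1B}=O_p(n^{-1/2})O_p(n^{-1/2})$ and the covariance term is $\tfrac1n\,O(1)\,O_p(1)$, both $O_p(n^{-1})$, while $C_{B,I}=O(n^{-2})$ renders the leverage-weighted term $O_p(n^{-2})$.
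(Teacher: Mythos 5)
Your proposal is correct and follows essentially the same route as the paper: it reduces the corollary to showing that $\E[\widehat{Bias}_I]$ equals the bias identified in Corollary 3.2, treats $\zB'\nu_{1B}$ as an exactly observable plug-in term, and matches the remaining covariance and triple-product terms using the same inclusion-probability and within-group sample-moment identities that the paper packages as Lemmas A.1--A.2, arriving at the same constants (your closed form $C_{B,I}=n_A(n-2n_B)/[n^2(n_B-1)(n_B-2)]$ agrees with the paper's $N_{BBB}N_{Adj,B}$). The only difference is presentational: you pin down the scalar by equating total expectations rather than matching the two pieces term by term, which is equivalent here.
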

\begin{remark}
    Note that both adjustments in Theorem 4.1 and Theorem 4.2 are of order $O_p(n^{-1})$. Thus $\sqrt{n}(\widehat{ATE}_{NI,Debias}-\widehat{ATE}_{NI})=o_p(1)$ and $\sqrt{n}(\widehat{ATE}_{I,Debias}-\widehat{ATE}_{I})=o_p(1)$. The debiased estimators have the same limiting distributions as the original estimators.
\end{remark}
\begin{remark}
We briefly remark on why it is possible to design unbiased adjusted estimators in closed-form. Examining at the expressions in Section 3, although all biases are nonlinear, only $v_2$, $v_{1A}$ and $v_{1B}$ have an infinite order Taylor expansion, but these terms can be expressed purely in terms of the observable data. All other terms are in expectation functions of moments that can be unbiasedly estimated.
\end{remark}
\section{Simulations}
In this section we apply our estimators on several datasets. 

We briefly comment on variance estimation and confidence interval construction. We showed in Section 4 that our debiased estimators have the same asymptotic distributions as those of the OLS estimators. This implies that for large samples we can recenter the OLS confidence intervals with our debiased estimators and expect the same coverage probabilities. For small samples, however, we find it important to account for the sampling variability of the additional terms. Indeed, in one of the simulations below, we found that a naive recentering procedure may lead to severe undercoverage. To address this problem, we propose a simple procedure shown to work well on our simulated datasets.\footnote{Another way is to directly estimate the variances of the additional terms, but this may be cumbersome to do.} In this procedure, one first runs the OLS regression and computes the debiased estimator. Then one replaces the OLS treatment coefficient with the debiased estimated coefficients and recomputes the OLS residuals, keeping all other coefficients the same. Finally, one computes the variances and constructs confidence intervals for the debiased estimator using the same formula as for the OLS estimators. In the simulations below, such procedures will be denoted by BC, which stands for bias-corrected. In Appendix B, one can find a more detailed comparison of this new procedure with standard ones. In practice, we recommend researchers to use our debiased estimators with this procedure, the BC-HC2 heteroskedasticity-robust variance estimator with a Satterthwaite adjustment for inference\footnote{For a discussion of the Satterthwaite adjustment, see \cite{satterthwaite1946approximate},\cite{bell2002bias}, \cite{lin2013agnostic} and \cite{imbens2010better}.}. 

\subsection{Simulated Datasets}
In this section we compare the performance of our debiased estimators with that of standard estimators using simulated datasets. We show the results of two simulation schemes here\footnote{In Appendix B, readers can find results for two more simulation schemes as well as some graphical information of the data generating processes.}.  In each scheme, we first generate two dimensional covariates that are the quantiles of some prespecified distributions\footnote{For example, in scheme 1, with a sample of $N$ units, the covariates of $i$th unit are the $\frac{i}{N+1}$ quantiles of a Beta(0.5,0.5) distribution and a Triangle(0,1) distribution.}. We then compute the studentized leverage ratios and use those to impute potential outcomes. We consider three ways to impute potential outcomes. For all cases the average treatment effect is equal to 0. The experiment is a completely randomized experiment with 24 units and an inclusion probability of $\frac{1}{3}$ for the treatment arm. Table~\ref{Table1} includes simulation details. Note that these schemes are designed specifically such that the finite sample bias is relatively large. \\
\indent Table 2 and 3 report the simulation results for the two schemes. Our debiased estimators are exactly unbiased as expected. In terms of the root mean squared error (RMSE) the picture is less clear. There are cases where the debiased estimators dominated others (DGP 1.1 and DGP 1.3), and cases where the unadjusted estimator is the best (DGP 1.2 and DGP 2.2)\footnote{This is the artifact of the DGPs. Recall the variance formula for the difference-in-means estimator, for example from \cite{imbens2015causal}.}. Note that DGP 1.3 and DGP 2.3 are constant effect models, in which the noninteracted OLS estimators are first order unbiased. However in DGP 1.3 we still observe a small, higher-order bias.     \\
\indent In terms of confidence interval coverages, observe in DGP 2.1, 2.2 and 2.3 that the original recentering intervals exhibit significant undercoverages. However, the procedure based on recomputing the OLS residuals with Satterthwaite adjustments works reasonably well. There is only one case DGP2.3, Non-Int, where the coverage is not very satisfactory. As shown in Table 6 and 8 in the appendix, the BC procedures do not add to the median confidence interval length significantly, although it tends to add to the average confidence interval length. The Satterthwaite could also add to the median (mean) confidence interval length: it typically increases the confidence interval length by at most 10 to 20 percent, compared with the Student-t adjustment. 
\begin{table}[H] 
\centering 
\begin{threeparttable}
\begin{tabular}{l c c c c c c } 
\toprule 
\cmidrule(l){2-6} 
 & $X_1(i)$ & $X_2(i)$ & $Y_0(i)$ & $Y_1(i)$ & \#treated & ATE \\ 
\midrule 
\textbf{Scheme 1, N=24}& & & & & &  \\
DGP1.1 &\multirow{3}{*}{Beta(0.5,0.5)}& \multirow{3}{*}{Tri(0,1)} & 0 & 2$h_i$ & \multirow{3}{*}{8} &  \multirow{3}{*}{0}\\
DGP1.2 & & & -$h_i$ & $h_i$ &&\\
DGP1.3& & & $h_i$ & $h_i$ && \\
& & & & \\
\textbf{Scheme 2, N=24}& & & &  \\
DGP2.1 &\multirow{3}{*}{Beta(2,5)}& \multirow{3}{*}{Norm(0,1)} & 0 & 2$h_i$ & \multirow{3}{*}{8} &  \multirow{3}{*}{0} \\
DGP2.2& & & -$h_i$ & $h_i$&&\\
DGP2.3 & & & $h_i$ & $h_i$&&\\
\bottomrule
\end{tabular}
 \begin{tablenotes}
      \small
\item Note: DGPs for Simulations. Beta($\alpha$,$\beta$) are the beta distributions with shape parameters $\alpha$ and $\beta$. Tri(0,1) is the symmetric triangular distribution on the unit interval. Norm(0,1) is the standard normal distribution.  $h_i$ is the studentized leverage ratio for the $i$th unit. It is computed as $h_i=\frac{v_i-\bar{v}}{\sigma_v}$, where $v_i=x_i'(\sum_{i=1}^nx_ix_i')^{-1}x_i$, $\bar{v}=\frac{1}{n}\sum_{i=1}^nv_i$ and $\sigma^2_v=\frac{1}{n-1}\sum_{i=1}^N(v_i-\bar{v}^2)$. Note DGP1.1, DGP1.2, DGP2.1 and DGP2.2 are variable effects models, and DGP 1.3 and DGP 2.3 are constant effects models.
\end{tablenotes}
\label{Table1}
\end{threeparttable}
\end{table}

\begin{table}[H]
\centering
\caption{Simulation Results for Scheme 1}    
\begin{threeparttable}
\begin{tabular}{lccccc}
\toprule 
& \multicolumn{5}{c}{\textbf{ATE Estimators}} \\ 
\cmidrule(l){2-6} 
 & Unadjusted & \multicolumn{2}{c}{OLS} & \multicolumn{2}{c}{Debiased} \\ 
  & & Non-Int. & Interacted & Non-Int. & Interacted \\ 
\midrule 
\textbf{DGP1.1, N =24}& & & & & \\
Bias & -0.000 & -0.044 & -0.171 & -0.000 & -0.000 \\ 
  SD & 0.577 & 0.569 & 0.734 & 0.558 & 0.570 \\ 
  RMSE & 0.577 & 0.571 & 0.754 & 0.558 & 0.570 \\ 
  \makecell[l]{CI Coverage  (HC2, Student-t)} & 0.961 & 0.957 & 0.919 &  0.960 & 0.953 \\ 
    \makecell[l]{CI Coverage  (HC2, Satterthwaite)} & 0.965 & 0.964 & 0.949 &  0.966 & 0.970 \\ 
  \makecell[l]{CI Coverage  (BC-HC2, Student-t)} &  &  && 0.961 & 0.957 \\  
  \makecell[l]{CI Coverage  (BC-HC2, Satterthwaite)} &  &  && 0.967 & 0.973 \\  
&&&&&\\
 \textbf{DGP1.2, N =24}& & & & & \\
 Bias & -0.000 & -0.046 & -0.097 & -0.000 & -0.000 \\ 
  SD & 0.144 & 0.220 & 0.275 & 0.205 & 0.182 \\ 
  RMSE & 0.144 & 0.225 & 0.292 & 0.205 & 0.182 \\ 
 \makecell[l]{CI Coverage  (HC2, Student-t)} & 1.000 & 0.999 & 0.982 & 1.000 & 0.999 \\ 
     \makecell[l]{CI Coverage  (HC2, Satterthwaite)} & 1.000 & 1.000 & 0.994 &  1.000 &1.000 \\ 
   \makecell[l]{CI Coverage  (BC-HC2, Student-t)} &  &  && 1.000 & 1.000 \\  
 \makecell[l]{CI Coverage (BC-HC2, Satterthwaite)} &  &  &  & 1.000 & 1.000 \\  
&&&&&\\
 \textbf{DGP1.3, N =24}& & & & & \\
  Bias & -0.000 & 0.002 & -0.074 & 0.000 & 0.000 \\ 
  SD & 0.433 & 0.417 & 0.483 & 0.400 & 0.408 \\ 
  RMSE & 0.433 & 0.417 & 0.489 & 0.400 & 0.408 \\ 
 \makecell[l]{CI Coverage  (HC2, Student-t)} & 0.940 & 0.938 & 0.916 & 0.946 & 0.948 \\ 
  \makecell[l]{CI Coverage  (HC2, Satterthwaite)} & 0.947 & 0.949 & 0.950 & 0.956 & 0.970 \\ 
    \makecell[l]{CI Coverage  (BC-HC2, Student-t)} &  &  && 0.947 & 0.950 \\  
 \makecell[l]{CI Coverage  (BC-HC2, Satterthwaite)} &  &  &  &0.956 & 0.971 \\ 
   \hline  
\end{tabular}
 \begin{tablenotes}
      \small
\item Note: CI Coverage measures the empirical coverage rates of nominal 95 percent confidence intervals. The unit of CI coverage is $\times$100 percentage points. CI Coverage (HC2, Student-t) is calculated using the original OLS residuals. CI Coverage (BC-HC2, Student-t) and CI Coverage (BC-HC2, Satterthwaite) are calculated using the recomputed OLS residuals. Both Student-t and Satterthwaite adjustments are calculated using the R-package \texttt{clubSandwich}\cite{clubSandwich}. The number of simulation is ${24 \choose 8}$.
\end{tablenotes}
\end{threeparttable}
\end{table}
\begin{table}[H]
\centering
\caption{Simulation Results for Scheme 2}    
\begin{threeparttable}
\begin{tabular}{lccccc}
\toprule 
& \multicolumn{5}{c}{\textbf{ATE Estimators}} \\ 
\cmidrule(l){2-6} 
 & Unadjusted & \multicolumn{2}{c}{OLS} & \multicolumn{2}{c}{Debiased} \\ 
  & & Non-Int. & Interacted & Non-Int. & Interacted \\ 
\midrule 
\textbf{DGP2.1, N =24}& & & & & \\
Bias & 0.000 & -0.237 & 0.028 & 0.000 & 0.000 \\ 
  SD & 0.577 & 0.344 & 0.283 & 0.459 & 0.439 \\ 
  RMSE & 0.577 & 0.418 & 0.284 & 0.459 & 0.439 \\ 
 \makecell[l]{CI Coverage \\ (HC2, Student-t)} & 0.910 & 0.913 & 0.757 & 0.923 & 0.470 \\ 
 \makecell[l]{CI Coverage \\ (HC2, Satterthwaite)} & 0.915& 0.920 & 0.837 & 0.928 & 0.548 \\ 
  \makecell[l]{CI Coverage \\ (BC-HC2, Student-t)} & &  &  & 0.923 & 0.876 \\ 
 \makecell[l]{CI Coverage \\ (BC-HC2, Satterthwaite)} & &  &  & 0.928 & 0.930 \\ 
&&&&&\\
\textbf{DGP2.2, N =24}& & & & & \\
Bias & 0.000 & -0.237 & 0.015 & 0.000 & 0.000 \\ 
  SD & 0.144 & 0.326 & 0.132 & 0.314 & 0.225 \\ 
  RMSE & 0.144 & 0.403 & 0.133 & 0.314 & 0.225 \\ 
  Coverage\_HC2 & 1.000 & 0.929 & 0.933 & 0.969 & 0.724 \\ 
 \makecell[l]{CI Coverage  (HC2, Student-t)}  & 1.00 & 0.93 & 0.93 & 0.967 & 0.614 \\ 
  \makecell[l]{CI Coverage  (HC2, Satterthwaite)}  & 1.00 & 0.935 & 0.991 & 0.969 & 0.724 \\ 
  \makecell[l]{CI Coverage  (BC-HC2, Student-t)} & &  &  & 0.965 & 0.967 \\ 
 \makecell[l]{CI Coverage  (BC-HC2, Satterthwaite)} &  & &  & 0.968 & 0.996 \\ 
&&&&&\\
 \textbf{DGP2.3, N =24}& & & & & \\
 Bias & 0.000 & 0.000 & 0.013 & 0.000 & 0.000 \\ 
  SD & 0.433 & 0.097 & 0.163 & 0.195 & 0.239 \\ 
  RMSE & 0.433 & 0.097 & 0.164 & 0.195 & 0.239 \\
 \makecell[l]{CI Coverage (HC2, Student-t)}  & 0.93 & 0.97 & 0.85 & 0.654 & 0.570 \\ 
 \makecell[l]{CI Coverage  (HC2, Satterthwaite)}  & 0.942 & 0.983  &0.947 & 0.683 & 0.678 \\ 
  \makecell[l]{CI Coverage  (BC-HC2, Student-t)} & &  &  & 0.809 & 0.896 \\ 
 \makecell[l]{CI Coverage  (BC-HC2, Satterthwaite)}  &  &  & & 0.850 & 0.944 \\ 
   \hline  
\end{tabular}

 \begin{tablenotes}
      \small
\item Note: CI Coverage measures the empirical coverage rates of nominal 95 percent confidence intervals. The unit of CI coverage is $\times$100 percentage points. CI Coverage (HC2, Student-t) is calculated using the original OLS residuals. CI Coverage (BC-HC2, Student-t) and CI Coverage (BC-HC2, Satterthwaite) are calculated using the recomputed OLS residuals. Both Student-t and Satterthwaite adjustments are calculated using the R-package \texttt{clubSandwich}\cite{clubSandwich}. The number of simulation is ${24 \choose 8}$.
\end{tablenotes}
\end{threeparttable}
\end{table}

\subsection{Real Dataset}

In this section we compare the performance of debiased estimators with that of standard OLS estimators on a real world dataset. We precisely follow \cite{lin2013agnostic}'s simulation setting. We generate our simulation from the experimental example of \cite{angrist2009incentives} by simulating random assignments under the maintained hypothesis of no treatment effect. Because this setting assumes no effects, bias is expected to be negligible: \cite{freedman2008A} notes that the leading term of the bias is greatest when treatment effects are heterogeneous. The simulation is thus not primarily meant to investigate bias, but rather the precision and coverage consequences of the use of our bias corrections in a real-world setting.\footnote{We thank Winston Lin for sharing the replication files.} \\
\indent  \cite{angrist2009incentives} sought to measure the effects of support services and financial incentives on college students' academic achievement. The experiment randomly assigns eligible first-year undergraduate students into four groups. One treatment group was offered both support services and financial incentives. A second group was offered only support services and a third group only financial incentives. The control group was eligible only for standard university support services. As in \cite{lin2013agnostic}, we only use the data for men in the services-and-incentives ($N=58$) and service-only ($N=99$) groups. The simulation datasets are generated assuming the treatment has no effect on any students. We replicate the experiments $10^7$ times, and each time randomly assign $58$ students to the services-and-incentives group and $99$ to the service-only group. The regression estimators estimate the treatment effects adjusting for high-school GPAs. The standard errors of the OLS estimators are estimated using the standard sandwich formulas.\\
\indent Table 1 reports the simulation results from $10^7$ simulations. The first and second rows of the table show the means and standard deviations of the five estimators. All estimators are approximately unbiased after rounding and the variances of the debiased estimators are no larger than the standard estimators. The third row shows that all confidence intervals cover the true value of the ATE with approximately 95 percent probability. The fourth row reports the average length of the confidence intervals. On average, the intervals of regression-adjusted estimators are slightly narrower than that of the unadjusted estimator. (The width of the confidence intervals for the debiased estimators are mechanically identical to those for the standard estimators due to being constructed using the same SE estimators.)

\begin{table}[H]
\centering
\caption{Simulation of Angrist, Lang, Oreopolous (2009) experiment with zero treatment effects ($10^7$ replications).}
\begin{threeparttable}
\begin{tabular}{l c c c c c}
\toprule
& \multicolumn{5}{c}{\textbf{ATE Estimators}} \\ 
\cmidrule(l){2-6} 
 & Unadjusted & \multicolumn{2}{c}{OLS} & \multicolumn{2}{c}{Debiased} \\ 
  & & Non-Int. & Interacted & Non-Int. & Interacted \\ 
\midrule 
Bias & 0.000 & 0.000 & 0.000 & 0.000 & 0.000\\ 
SD & 0.159 & 0.150 & 0.147 & 0.147 & 0.147\\ 
MSE & 0.159 & 0.150 & 0.147 & 0.147 & 0.147\\ 
\makecell[l]{CI Coverage  (HC2, Student-t)}& 0.949 & 0.949 & 0.949 & 0.949 & 0.949\\ 
\makecell[l]{CI Coverage  (HC2, Satterthwaite)}& 0.949 & 0.949 & 0.949 & 0.950 & 0.950\\ 
  \makecell[l]{CI Coverage  (BC-HC2, Student-t)} &  & &  & 0.949 & 0.949 \\
 \makecell[l]{CI Coverage  (BC-HC2, Satterthwaite)} &  &&  & 0.950 & 0.950 \\
\bottomrule
\end{tabular}
 \begin{tablenotes}
      \small
\item Note: CI Coverage measures the empirical coverage rates of nominal 95 percent confidence intervals. The unit of CI coverage is $\times$100 percentage points. CI Coverage (HC2, Student-t) is calculated using the original OLS residuals. CI Coverage (BC-HC2, Student-t) and CI Coverage (BC-HC2, Satterthwaite) are calculated using the recomputed OLS residuals. Both Student-t and Satterthwaite adjustments are calculated using the R-package \texttt{clubSandwich}\cite{clubSandwich}.
\end{tablenotes}
\end{threeparttable}
\end{table}
Together, these results demonstrate that, in a real-world setting, our bias corrections can be effectively introduced without appreciably compromising the precision or coverage properties of regression adjusted estimators.


\pagebreak[4]
\pagebreak[4]
\appendix 

\section{Proofs}
\subsection{Constants}
We define following five constants:
\begin{align*}
    N_{AAA} &= \frac{n}{n_A^3}(\frac{n_A}{n}-\frac{3n_A(n_A-1)}{n(n-1)}+\frac{2n_A(n_A-1)(n_A-2)}{n(n-1)(n-2)})\\ 
    N_{BBB} &= \frac{n}{n_B^3}(\frac{n_B}{n}-\frac{3n_B(n_B-1)}{n(n-1)}+\frac{2n_B(n_B-1)(n_B-2)}{n(n-1)(n-2)})\\ 
    N_{AAB} &= \frac{n}{n_A^2n_B}(-\frac{n_An_B}{n(n-1)}+\frac{n_A(n_A-1)n_B}{n(n-1)(n-2)})\\ 
    N_{Adj,A} & =  \frac{{n(n-1)(n-2)}}{(n_A-1)(n_A-2)n_A} \frac{n_A^3}{n^3}\\
    N_{Adj,B} & =  \frac{{v(n-1)(n-2)}}{(n_B-1)(n_B-2)n_B} \frac{n_B^3}{n^3}\\
\end{align*}
\subsection{Auxiliary Lemmas}
Let $x_i, y_i$ and $z_i$ be three possibly identical random variables such that $\bar{x}=\bar{y}=\bar{z}=0$. 
\begin{lemma}
\begin{equation*}
    E[\Bar{x}_A\Bar{y}_A\Bar{z}_A ] = N_{AAA}\frac{1}{n}\sum_{i=1}^n x_iy_iz_i
\end{equation*}
\begin{equation*}
    E[\Bar{x}_A\Bar{y}_A\Bar{z}_B ] =N_{AAB}\frac{1}{n}\sum_{i=1}^n x_iy_iz_i
\end{equation*}
\end{lemma}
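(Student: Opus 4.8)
The plan is to expand each group mean as a linear combination of the arm-$A$ indicators and reduce the expectations to elementary selection probabilities under sampling without replacement. Recall that $T_i$ is the arm-$A$ indicator, so $\bar{x}_A=\frac{1}{n_A}\sum_{i=1}^n T_ix_i$ and $\bar{z}_B=\frac{1}{n_B}\sum_{k=1}^n(1-T_k)z_k$. Hence
$$\bar{x}_A\bar{y}_A\bar{z}_A=\frac{1}{n_A^3}\sum_{i,j,k}T_iT_jT_k\,x_iy_jz_k,\qquad \bar{x}_A\bar{y}_A\bar{z}_B=\frac{1}{n_A^2n_B}\sum_{i,j,k}T_iT_j(1-T_k)\,x_iy_jz_k.$$
Taking expectations, the problem reduces to evaluating $E[T_iT_jT_k]$ and $E[T_iT_j(1-T_k)]$, which depend only on how many of the indices coincide. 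Using $T_i^2=T_i$ together with the complete-randomization probabilities $E[T_i]=n_A/n$, $E[T_iT_j]=\frac{n_A(n_A-1)}{n(n-1)}$ for $i\ne j$, and $E[T_iT_jT_k]=\frac{n_A(n_A-1)(n_A-2)}{n(n-1)(n-2)}$ for distinct $i,j,k$, I would record the constant attached to each coincidence pattern.

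The second ingredient is to collapse the resulting index sums using the centering hypothesis $\bar{x}=\bar{y}=\bar{z}=0$. Partition $\sum_{i,j,k}$ into the diagonal ($i=j=k$), the three ``exactly two equal'' patterns, and the fully off-diagonal part, and set $S=\sum_{i=1}^n x_iy_iz_i$. Each two-equal sum telescopes to $-S$: for instance $\sum_{i\ne k}x_iy_iz_k=\sum_i x_iy_i\sum_{k\ne i}z_k=-\sum_i x_iy_iz_i=-S$ since $\sum_k z_k=0$, and the remaining two evaluate to $-S$ by the same argument. Because the unrestricted sum factors as $(\sum_i x_i)(\sum_j y_j)(\sum_k z_k)=0$, the fully off-diagonal sum is then forced to equal $2S$. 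For the $AAA$ identity I would substitute the pattern values $S$, $-S$, $2S$ weighted by their probabilities, recall $S=n\cdot\frac{1}{n}\sum_i x_iy_iz_i$, and collect terms to land on exactly $N_{AAA}\cdot\frac{1}{n}\sum_i x_iy_iz_i$.

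For the mixed case I would additionally use $T_i(1-T_i)=0$, which annihilates every term in which the control index $k$ coincides with an arm-$A$ index: the $i=k$, $j=k$, and $i=j=k$ patterns all drop out. Only the $i=j\ne k$ pattern (probability $E[T_i(1-T_k)]=\frac{n_An_B}{n(n-1)}$, index sum $-S$) and the all-distinct pattern (probability $\frac{n_A(n_A-1)n_B}{n(n-1)(n-2)}$, index sum $2S$) survive; collecting these two contributions and rescaling by $S=n\cdot\frac{1}{n}\sum_i x_iy_iz_i$ delivers the second identity with constant $N_{AAB}$.

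I expect the routine parts to be the probability evaluations and the final algebraic collection of coefficients. The step demanding the most care is the bookkeeping of coincidence patterns: correctly enumerating the ``two equal'' sub-cases, applying the centering reductions—especially the off-diagonal identity $\sum_{\text{distinct}}x_iy_jz_k=2S$, which is easy to miscount—and, in the mixed case, tracking precisely which patterns are killed by $T_i(1-T_i)=0$. Once these are pinned down, matching the collected coefficients against $N_{AAA}$ and $N_{AAB}$ is immediate.
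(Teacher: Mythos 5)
Your strategy is precisely the one the paper uses: expand each arm mean in the treatment indicators, classify the triple sum by coincidence patterns, evaluate $E[T_iT_jT_k]$ (resp.\ $E[T_iT_j(1-T_k)]$) pattern by pattern, and use the centering hypothesis to reduce each index sum to $S$, $-S$, or $2S$. For the $AAA$ identity your weights $\frac{n_A}{n}$, $-\frac{3n_A(n_A-1)}{n(n-1)}$ and $\frac{2n_A(n_A-1)(n_A-2)}{n(n-1)(n-2)}$ reproduce $N_{AAA}$ exactly, and your identification of which patterns are annihilated by $T_i(1-T_i)=0$ in the mixed case is also correct.

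One caveat about the final ``matching is immediate'' step in the mixed case. Collecting the two surviving contributions you list gives
\begin{equation*}
E[\bar{x}_A\bar{y}_A\bar{z}_B]=\frac{1}{n_A^2n_B}\left(-\frac{n_An_B}{n(n-1)}+\frac{2\,n_A(n_A-1)n_B}{n(n-1)(n-2)}\right)\sum_{i=1}^n x_iy_iz_i,
\end{equation*}
because the all-distinct pattern contributes its probability times the index sum $2S$, not $S$. The constant $N_{AAB}$ printed in the appendix omits that factor of $2$ on the second term, so your collected coefficients do not in fact match it, and the last step as you describe it would fail. The discrepancy is not in your bookkeeping: taking $n=3$, $n_A=2$, $x=y=z=(2,-1,-1)$ gives $E[\bar{x}_A^2\bar{x}_B]=\tfrac{1}{2}$ by direct enumeration over the three possible assignments, which agrees with the displayed coefficient above, whereas $N_{AAB}\cdot\frac{1}{n}\sum_i x_i^3=0$ for this example. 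So carry your computation through and report the coefficient you actually obtain rather than forcing agreement with the printed $N_{AAB}$.
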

\begin{proof}
We only prove the first equality. The second one can be proved analogously. First notice two useful equalities:
\begin{align*}
    & E[\sum_{i=1}^n T_ix_iy_i \sum_{j\not=i}T_jz_i] =E[\sum_{i=1}^n \sum_{j\not =i}T_iT_jx_iy_iz_j] = \sum_{i=1}^n\sum_{j\not =i} E[T_iT_j]x_iy_iz_j \\
    & = \frac{n_A(n_A-1)}{n(n-1)} \sum_{i=1}^n\sum_{j\not =i} x_iy_iz_j =  \frac{n_A(n_A-1)}{n(n-1)}( \sum_{i=1}^n\sum_{j=1}^n x_iy_iz_j -  \sum_{i=1}^Nx_iy_iz_i) = -\frac{n_A(n_A-1)}{n(n-1)}\sum_{i=1}^Nx_iy_iz_i,
\end{align*}
where in third equality uses the second moment estimate for a complete random experiments and the fifth equality uses the fact that $\sum_{i=1}^nz_i = n\bar{z}=0$.
\begin{align*}
    & E[\sum_{i=1}^nT_ix_i\sum_{j\not = i}T_jy_j \sum_{s\not \in \{i,j\}}T_sz_s] = \sum_{i=1}^n \sum_{j\not = i}\sum_{s\not \in \{i,j\}}E[T_iT_jT_s]x_iy_jz_s \\
    & = \frac{n_A(n_A-1)(n_A-2)}{n(n-1)(n-2)}\sum_{i=1}^n\sum_{j\not = i}\sum_{s\not \in \{i,j\}}x_iy_jz_s\\
    &= \frac{n_A(n_A-1)(n_A-2)}{n(n-1)(n-2)} (\sum_{i=1}^n \sum_{j\not = i}\sum_{s=1}^n x_iy_jz_s - \sum_{i=1}^n \sum_{j\not =i}x_iy_j(z_i+z_j))\\
    &= \frac{n_A(n_A-1)(n_A-2)}{n(n-1)(n-2)} (- \sum_{i=1}^n \sum_{j=1}^n x_iy_j(z_i+z_j)+ 2\sum_{i=1}^n x_iy_iz_i)\\
    &= \frac{2n_A(n_A-1)(n_A-2)}{n(n-1)(n-2)} \sum_{i=1}^n x_iy_iz_i,
\end{align*}
where the fourth and fifth equality use $\sum_{i=1}^nx_i=\sum_{i=1}^ny_i=\sum_{i=1}^nz_i=0$.
Finally,
\begin{align*}
   & E[\Bar{x}_A\Bar{y}_A\Bar{z}_A ] = \frac{1}{n_A^3} E[\sum_{i=1}^n T_ix_i\sum_{i=1}^n T_iy_i\sum_{i=1}^n T_iz_i] \\
   & = \frac{1}{n_A^3} (E[\sum_{i}T_ix_iy_iz_i] + E[\sum_{i=1}^n\sum_{j\not =i}T_iT_j(x_iy_iz_j + x_iy_jz_i+ x_jy_iz_i)] + E[\sum_{i=1}^NT_ix_i\sum_{j\not = i}T_jy_j \sum_{s\not \in \{i,j\}}T_sz_s] \\
   & = \frac{1}{n_A^3} (\frac{n_A}{n}-\frac{3n_A(n_A-1)}{n(n-1)} + \frac{2n_A(n_A-1)(n_A-2)}{n(n-1)(n-2)} )\sum_{i=1}^n x_iy_iz_i,
\end{align*}
where for the last equality we apply the previous two equalities.
\end{proof}
\begin{lemma}
\begin{equation}
 N_{Adj,A}E[\frac{1}{n_A}\sum_{i\in[A]}(x_i-\bar{x}_A)(y_i-\bar{y}_A)(z_i-\bar{z}_A)] =  \frac{1}{n}\sum_{i=1}^n (x_i-\bar{x})(y_i-\bar{y})(z_i-\bar{z})
\end{equation}
\end{lemma}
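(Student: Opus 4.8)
The plan is to first reduce the left-hand side, pointwise in the assignment, to raw within-group moments, and then to take expectations term by term using a first-, a second-, and a third-moment identity. Expanding the central product and using $\sum_{i\in[A]}x_i=n_A\bar x_A$ together with the analogous identities for $y$ and $z$, one obtains the purely algebraic identity
\[
\frac1{n_A}\sum_{i\in[A]}(x_i-\bar x_A)(y_i-\bar y_A)(z_i-\bar z_A)=\overline{xyz}_A-\bar z_A\,\overline{xy}_A-\bar y_A\,\overline{xz}_A-\bar x_A\,\overline{yz}_A+2\,\bar x_A\bar y_A\bar z_A,
\]
where $\overline{xyz}_A=\frac1{n_A}\sum_{i\in[A]}x_iy_iz_i$ and $\overline{xy}_A,\overline{xz}_A,\overline{yz}_A$ denote the corresponding group-$A$ averages of the pairwise products. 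Because this holds for every realized assignment, it suffices to evaluate the expectation of each of the five terms.

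I would use three ingredients. The leading term is handled by the symmetry of complete randomization, which gives $E[\bar w_A]=\bar w$ for any $w$, hence $E[\overline{xyz}_A]=\frac1n\sum_{i=1}^n x_iy_iz_i$. For the three cross terms I would establish the second-moment identity that, whenever $\bar u=0$ and $v$ is arbitrary,
\[
E[\bar u_A\bar v_A]=\frac{n-n_A}{n_A\,n(n-1)}\sum_{i=1}^n u_iv_i,
\]
derived exactly as in the opening of the previous lemma's proof: write $\bar u_A\bar v_A=\frac1{n_A^2}\sum_{i,j}T_iT_j u_iv_j$, separate $i=j$ from $i\neq j$, insert $E[T_i]=n_A/n$ and $E[T_iT_j]=n_A(n_A-1)/\big(n(n-1)\big)$, and simplify $\sum_{i\neq j}u_iv_j=n^2\bar u\bar v-\sum_i u_iv_i$, in which the term $n^2\bar u\bar v$ drops out precisely because $\bar u=0$. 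Taking $(u,v)$ equal to $(z,xy)$, $(y,xz)$, and $(x,yz)$ shows that each cross term equals $\frac{n-n_A}{n_A n(n-1)}\sum_i x_iy_iz_i$. The last term is exactly the previous lemma: $E[\bar x_A\bar y_A\bar z_A]=N_{AAA}\frac1n\sum_i x_iy_iz_i$.

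Collecting the contributions, the expectation of the left-hand sum equals
\[
\Big(\tfrac1n-\tfrac{3(n-n_A)}{n_A\,n(n-1)}+\tfrac{2N_{AAA}}{n}\Big)\sum_{i=1}^n x_iy_iz_i,
\]
while, since $\bar x=\bar y=\bar z=0$, the right-hand side of the lemma is just $\frac1n\sum_i x_iy_iz_i$. The claim thus reduces to the single scalar identity $N_{Adj,A}\big(\frac1n-\frac{3(n-n_A)}{n_A n(n-1)}+\frac{2N_{AAA}}{n}\big)=\frac1n$. I expect this final simplification to be the only genuine obstacle: after substituting the definition of $N_{AAA}$ and clearing the common denominator $n_A^2\,n(n-1)(n-2)$, one must check that the numerator collapses to $n^2(n_A-1)(n_A-2)$, which is tedious but entirely mechanical bookkeeping. (As a sanity check, at $n=5,\ n_A=3$ the bracketed coefficient equals $5/54$, matching $\frac1n N_{Adj,A}^{-1}$.) The fact that $N_{Adj,A}$ is exactly the reciprocal of that coefficient is the multivariate, finite-population analogue of the classical unbiasing factor for the third sample central moment under sampling without replacement, which explains why the stated normalization makes the estimator exactly unbiased.
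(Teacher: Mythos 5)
Your proof is correct, but it takes a genuinely different route from the paper's. You expand the within-group central third moment into the five raw-moment products $\overline{xyz}_A-\bar z_A\,\overline{xy}_A-\bar y_A\,\overline{xz}_A-\bar x_A\,\overline{yz}_A+2\bar x_A\bar y_A\bar z_A$, evaluate each expectation with a first-, second-, and third-order inclusion-probability formula (the last being exactly Lemma A.1), and reduce the claim to the scalar identity $N_{Adj,A}\bigl(\tfrac1n-\tfrac{3(n-n_A)}{n_A n(n-1)}+\tfrac{2N_{AAA}}{n}\bigr)=\tfrac1n$; after clearing denominators the numerator does collapse to $n^2(n_A-1)(n_A-2)$ as you predict, so the argument closes. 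The paper instead writes $n^3\sum_i(x_i-\bar x)(y_i-\bar y)(z_i-\bar z)$ as the quadruple sum $\sum_{i,j,k,s}(x_i-x_j)(y_i-y_k)(z_i-z_s)$, sorts both the population quantity and $n_A^4$ times the sample statistic into sums over distinct index tuples, and takes expectations termwise; the proportionality constant $\tfrac{n_A^2(n_A-1)(n_A-2)}{n^2(n-1)(n-2)}$ then emerges already factored, so no final polynomial verification is required. Your version has the advantage of reusing Lemma A.1 and of exposing the classical finite-population unbiasing structure for the third central moment, at the cost of (i) leaving the constant check as an assertion backed only by a numerical spot check at $n=5$, $n_A=3$ (it is, however, a routine polynomial identity and it does hold), and (ii) genuine reliance on the standing hypothesis $\bar x=\bar y=\bar z=0$, both in your second-moment identity $E[\bar u_A\bar v_A]=\tfrac{n-n_A}{n_A n(n-1)}\sum_i u_iv_i$ and in identifying the right-hand side of the lemma with $\tfrac1n\sum_i x_iy_iz_i$. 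The paper's tuple-counting argument is valid without any centering, a mild gain in generality, though centering does hold everywhere the lemma is invoked.
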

\begin{proof}
\begin{align*}
    & n^3\sum_{i=1}^n (x_i-\Bar{x})(y_i-\Bar{y})(z_i-\Bar{z})= \sum_{i=1}^n\sum_{j=1}^n\sum_{k=1}^n\sum_{s=1}^n (x_i-x_j)(y_i-y_k)(z_i-z_s)  \\
    & = n^3 \sum_{i=1}^n x_iy_iz_i - n^2 \sum_{i=1}^n\sum_{j=1}^n (x_iy_ix_j+x_iy_jz_j+x_jy_iz_i) +2n\sum_i\sum_j\sum_s x_iy_jz_s \\
    & = (n^3-3n^2+2n)\sum_{i=1}^n x_iy_iz_i - (n^2-2n)\sum_{i=1\not=j} (x_iy_iz_j+x_iy_jz_j+x_jy_iz_i) +2n\sum_{i\not= j\not=s} x_iy_jz_s 
\end{align*}
Consider $n_A^4$ times the third moment estimator $\frac{1}{n_A}\sum_{i\in[A]}(x_i-\bar{x}_A)(y_i-\bar{y}_A)(z_i-\bar{z}_A)$:
\begin{align*}
    &= (n_A^3-3n_A^2+2n_A)\sum_{i=1}^n T_ix_iy_iz_i - (n_A^2-2n_A)\sum_{i=1\not=j}T_iT_j (x_iy_iz_j+x_iy_jz_j+x_jy_iz_i) +2n_A\sum_{i\not= j\not=s}T_iT_jT_s x_iy_jz_s \\
    &= n_A(n_A-1)(n_A-2)\sum_{i=1}^n T_ix_iy_iz_i - n_A(n_A-2)\sum_{i=1\not=j}T_iT_j (x_iy_iz_j+x_iy_jz_j+x_jy_iz_i) +2n_A\sum_{i\not= j\not=s}T_iT_jT_s x_iy_jz_s \\
\end{align*}
In expectation this equals to:
\begin{align*}
   &= \frac{n_A(n_A-1)(n_A-2)n_A}{n}\sum_{i=1}^n x_iy_iz_i - \frac{n_A(n_A-1)(n_A-2)n_A}{n(n-1)}\sum_{i=1\not=j} (x_iy_iz_j+x_iy_jz_j+x_jy_iz_i)\\ 
   & +2\frac{n_A(n_A-1)(n_A-2)n_A}{n(n-1)(n-2)}\sum_{i\not= j\not=s} x_iy_jz_s \\
   &= \frac{n_A(n_A-1)(n_A-2)n_A}{n(n-1)(n-2)n} ((n^3-3n^2+2n)\sum_{i=1}^n x_iy_iz_i - (n^2-2n)\sum_{i=1\not=j} (x_iy_iz_j+x_iy_jz_j+x_jy_iz_i)\\ 
   & +2N\sum_{i\not= j\not=s} x_iy_jz_s) \\
   & = \frac{n_A(n_A-1)(n_A-2)n_A}{n(n-1)(n-2)n}\times n^3\sum_{i=1}^n (x_i-\Bar{x})(y_i-\Bar{y})(z_i-\Bar{z})
\end{align*}
\end{proof}
\subsection{Theorem 3.1}
\begin{proof}
By the Frisch–Waugh–Lovell theorem, the OLS estimate of the coefficient can be written $\Q=\D^{-1}\N$, where
\begin{align*}
\N&=\pA (\azA-\aA\zA) +\pB (\bzB-\bB \zB)\\
&=\pA (\aszA-\asA\zA) +\pB (\bszB-\bsB \zB)\\
&= N+\pA (\aszA-\asz) +\pB (\bszB-\bsz)  -\pA \asA\zA -\pB \bsB\zB
\end{align*}
and
\begin{align*}
\D= & \zz - \pA \zA  \zA' - \pB \zB  \zB' .
\end{align*}
\begin{align*}
	\Q = & D^{-1} \N +\left(\D^{-1}-D^{-1}\right) \N
	\\ = & D^{-1} \left(N+\pA (\aszA-\asz) +\pB (\bszB-\bsz) - \pA \asA\zA -\pB \bsB\zB \right) +\left(\D^{-1}-D^{-1}\right) \N
	\\ = & Q + D^{-1}\left(\pA (\aszA-\asz) +\pB (\bszB-\bsz)\right)  - D^{-1}\left(\pA \asA\zA +\pB \bsB\zB \right)
	\\ &  +\left(\D^{-1}-D^{-1}\right) \N
	\\ =&  Q+ \nu_1 +\nu_3 +\nu_2,
\end{align*}
Now, consider the order of the terms $\nu_1$, $\nu_2$ and $\nu_3$.  For $\nu_1$, $D^{-1}=\zz$, $\pA$ and $\pB$ are $O(1)$ by assumption.  Meanwhile, $(\aszA-\az)$ and $(\bsz_B-\bz)$ are each $O_p(n^{-\frac{1}{2}})$ by moment conditions on $a$, $b$ and $z$.  Therefore $\nu_1 = O_p(n^{-\frac{1}{2}})$\\
For $\nu_2$, note that $\D - D=- \pA \zA  \zA' - \pB \zB  \zB' $ which is $O_p(n^{ -1})$.  $\D^{-1} - D^{-1}=\D^{-1} (D-\D) D^{-1} = O_p(1)O_p(n^{-1})O(1)$ is also $O_p(n^{-1})$.  Meanwhile, $\N$ converges to a constant vector so that it is $O_p(1)$. Therefore $\nu_2 = O_p(n^{-1})$\\
For $\nu_3$, $\zA$, $\zB$, $\bsB$ and $\asA$ are $O_p(n^{-\frac{1}{2}})$, and $D^{-1}$, $\pA$ and $\pB$ are $O_p(1)$. Therefore $\nu_3 = O_p(n^{-1})$
\end{proof}
\subsection{Corollary 3.1}
\begin{proof}
		\begin{align*}
	\E[ \ateAdj -ATE ] =&  \E \left[ \aA- \bB -(\zA-\zB)'\Q \right] -ATE
	\\ = &  \left(\E \left[ \aA- \bB\right] -ATE \right)- \E \left[ (\zA-\zB)'\right]Q  - \E \left[ (\zA-\zB)'(\nu_1+\nu_2+\nu_3) \right]
		\\ = &  \left(ATE  -ATE \right)- 0 \times Q  - \E \left[ (\zA-\zB)'(\nu_1+\nu_2+\nu_3) \right]  
	\\ = & -\E \left[ (\zA-\zB)'(\nu_1+\nu_2+\nu_3) \right].
	\end{align*}
where in the third equality we used the unbiasedness of the difference-in-means estimator and the unbiasedness of $\zA$ and $\zB$ as estimators of the sample mean $\zBar$. 
\end{proof}
\subsection{Theorem 3.2}
\begin{proof}
        The decomposition is algebraic. Now $\hat{N}_A-N_A=O_p(n^{-\frac{1}{2}})$,  $\hat{N}_B-N_B=O_p(n^{-\frac{1}{2}})$, $\hat{N}_A=O_p(1)$ and $\hat{N}_B=O_p(1)$ by the assumptions. \\
        $$\hat{D}^{-1}_A - D^{-1}  = \hat{D}_A^{-1} (D-\hat{D}_A) D^{-1} = O_p(1)O_p(n^{-\frac{1}{2}})O(1)$$
        Similarly $\hat{D}_B^{-1}-D^{-1}=O_p(n^{-\frac{1}{2}})$. These estimates give the orders in the theorem.
\end{proof}
\subsection{Corollary 3.2}
\begin{proof}
        Same as Corollary 3.1 after noting that $\zA=O_p(n^{-\frac{1}{2}})$ and $\zB=O_p(n^{-\frac{1}{2}})$
\end{proof}
\subsection{Theorem 4.1}
Let
\begin{equation*}
C_{A,NI} = \frac{n_A}{n_B} \frac{n}{n_A^3}(\frac{n_A}{n}-\frac{3n_A(n_A-1)}{n(n-1)}+\frac{2n_A(n_A-1)(n_A-2)}{n(n-1)(n-2)})\frac{{n(n-1)(n-2)}}{(n_A-1)(n_A-2)n_A} \frac{n_A^3}{n^3} 
\end{equation*}
\begin{equation*}
    C_{B,NI}= \frac{n_A}{n_B}\frac{n}{n_A^2n_B}(-\frac{n_An_B}{n(n-1)}+\frac{n_A(n_A-1)n_B}{n(n-1)(n-2)}) \frac{{n(n-1)(n-2)}}{(n_A-1)(n_A-2)n_A} \frac{n_A^3}{n^3}  
\end{equation*}
Note both of them are of order $O(n^{-1})$.
\begin{proof}
We first propose an estimator for $	\E[(\zB-\zA)\nu_1]$. Using the assumption $\zbar=0$, we have 
	\begin{align*}
	(\zB-\zA)= \frac{1}{\pA}\zB=-\frac{1}{\pB}\zA .
	\end{align*}
	Then we have
	\begin{align*}
	\E[(\zB-\zA)'\nu_1]& = \frac{1}{\pA}\E\left[ \zB' D^{-1} \left( \pA (\aszA-\asz) + \pB (\bszB-\bsz) \right) \right] \\
	& = \frac{1}{n-1}(\frac{1}{n}\sum_{i=1}^n z_i'D^{-1}b_i^*z_i -\frac{1}{n}\sum_{i=1}^n z_iD^{-1}a_i^*z_i)\\
	& = \frac{1}{n-1}(\overline{hb} - \bar{h}\bar{b} ) -\frac{1}{n-1}(\overline{ha} - \bar{h}\bar{a} ) 
 	\end{align*}
    where the second equality follows from Proposition 1, \cite{freedman2008B}. A estimator of this bias is:
    \begin{equation*}
        \frac{1}{n-1} \Large[ \frac{n_B \left( n-1\right) }{\left(n_B-1\right) n}\left( \overline{hb}_B- \overline{h}_B\overline{b}_B \right)-\frac{n_A \left(n-1\right) }{\left(n_A-1\right) n}\left( \overline{ha}_A -\overline{h}_A \overline{a}_A \right) \Large]
    \end{equation*}
    It is clear that this adjustment is of order $O_p(n^{-1})$.\\
    It should be obvious that $(\zA-\zB)'\nu_2$ is directly estimable and of order $O_p(n^{-1})$. \\ 
    We now propose an estimator for $\E[(\zB-\zA)'\nu_3]$:
    \begin{align*}
        \E[(\zB-\zA)'\nu_3] & = \frac{1}{p_B} E[\zA' D^{-1}p_A \asA \zA] + \frac{1}{p_B} E[\zA' D^{-1}p_B \bsB \zB]\\
        & = \frac{1}{p_B} E[\zA' D^{-1}p_A \asA \zA] - \frac{1}{p_B} E[\zA' D^{-1}p_A \bsB \zA]\\
        & =  \frac{p_A}{p_B} N_{AAA} \frac{1}{n}\sum_{i=1}^n z_i'D^{-1}z_i a^*_i -  \frac{p_A}{p_B} N_{AAB}\frac{1}{n} \sum_{i=1}^n z_i'D^{-1}z_i b^*_i 
    \end{align*}
By Lemma A1 and A2, an unbiased estimator for this quantity is:
\begin{align*}
        &\frac{n_A}{n_B} N_{AAA} N_{adj,A} \frac{1}{n_A}\sum_{i\in[A]} z_i'D^{-1}z_i (a_i-\aA) -  \frac{n_A}{n_B} N_{AAB} N_{adj,B}  \sum_{i\in[B]} z_i'D^{-1}z_i (b_i-\bB)\\
        & = \frac{C_{A,NI}}{n_A}\sum_{i\in[A]} z_i'D^{-1}z_i (a_i-\aA) - \frac{C_{B,NI}}{n_B} \sum_{i\in[B]} z_i'D^{-1}z_i (b_i-\bB)
\end{align*}
    It is clear that this adjustment is of order $O_p(n^{-1})$.\\
\end{proof}

\subsection{Theorem 4.2}
We have:
\begin{equation*}
    C_A = \frac{n}{n_A^3}(\frac{n_A}{n}-\frac{3n_A(n_A-1)}{n(n-1)}+\frac{2n_A(n_A-1)(n_A-2)}{n(n-1)(n-2)})\frac{{n(n-1)(n-2)}}{(n_A-1)(n_A-2)n_A} \frac{n_A^3}{n^3}  
\end{equation*}
\begin{equation*}
    C_B = \frac{n}{n_B^3}(\frac{n_B}{n}-\frac{3n_B(n_B-1)}{n(n-1)}+\frac{2n_B(n_B-1)(n_B-2)}{n(n-1)(n-2)})\frac{{n(n-1)(n-2)}}{(n_B-1)(n_B-2)n_B} \frac{n_B^3}{n^3}  
\end{equation*}
Note that $C_A$ and $C_B$ are of order $O(n^{-1})$
\begin{proof}
We prove the result for bias in the control arm. Proof for the treated arm is analogous. First notice
\begin{align*}
    E[\bar{z}_B'(\nu_{1B}+\nu_{2B})] & = E[\bar{z}_B'(\hat{D}^{-1}_B - D^{-1}) \hat{N}_B] \\
    & + E[\bar{z}_B'D^{-1} \frac{1}{n_B}\large( \sum_{i\in [B]}z_ib_i^* -\frac{1}{n}\sum_{i\in [N]}z_ib_i^*\large)]\\
    & - E[\bar{z}_B'D^{-1} \bar{z}_B \bar{b}^*_B]
\end{align*}
The first term is directly estimable and is of order $O_p(n^{-1})$. Analogous to the noninteractive case, the second term can be estimated by:
\begin{equation*}
    \frac{1}{n-1} \frac{n_A}{n_B} \frac{n_B \left( n-1\right) }{\left(n_B-1\right) n}\left( \overline{hb}_B- \overline{h}_B\overline{b}_B \right) = \frac{1}{n}\frac{n_A}{n_B-1} \left( \overline{hb}_B- \overline{h}_B\overline{b}_B \right)
\end{equation*}
It should be clear that this term is of order $O_p(n^{-1})$. \\
The third term can be estimated by:
\begin{align*}
    N_{BBB}*N_{adj,B}\frac{1}{n_B} \sum_{i\in[B]} z_i'D^{-1}z_i (b_i-\bB) = C_{B,I} \sum_{i\in[B]} z_i'D^{-1}z_i (b_i-\bB) 
\end{align*}
It is clear this term is of order $O_p(n^{-1})$
\end{proof}
\section{More Simulation Results}
\subsection{Details on Simulation Schemes}
\begin{table}[H] 
\centering 
\begin{tabular}{l c c c c c c } 
\toprule 
\cmidrule(l){2-6} 
 & $X_1(i)$ & $X_2(i)$ & $Y_0(i)$ & $Y_1(i)$ & \#treated & ATE \\ 
\midrule 
\textbf{Scheme 1, N=24}& & & & & &  \\
DGP1.1 &\multirow{3}{*}{Beta(0.5,0.5)}& \multirow{3}{*}{Tri(0,1)} & 0 & 2$h_i$ & \multirow{3}{*}{8} &  \multirow{3}{*}{0}\\
DGP1.2 & & & -$h_i$ & $h_i$ &&\\
DGP1.3& & & $h_i$ & $h_i$ && \\
& & & & \\
\textbf{Scheme 2, N=24}& & & &  \\
DGP2.1 &\multirow{3}{*}{Beta(2,5)}& \multirow{3}{*}{Norm(0,1)} & 0 & 2$h_i$ & \multirow{3}{*}{8} &  \multirow{3}{*}{0} \\
DGP2.2& & & -$h_i$ & $h_i$&&\\
DGP2.3 & & & $h_i$ & $h_i$&&\\
& & & & \\
\textbf{Scheme 3, N=24}& & & &  \\
DGP3.1 &\multirow{3}{*}{Uniform(0,1)}& \multirow{3}{*}{Uniform(0,1), Squared, Reversed} & 0 & 2$h_i$ & \multirow{3}{*}{8} &  \multirow{3}{*}{0} \\
DGP3.2& & & -$h_i$ & $h_i$&&\\
DGP3.3 & & & $h_i$ & $h_i$&&\\
& & & & \\
\textbf{Scheme 4, N=24}& & & &  \\
DGP3.1 &\multirow{3}{*}{Uniform(0,1)}& \multirow{3}{*}{Uniform(0,1), Squared} & 0 & 2$h_i$ & \multirow{3}{*}{8} &  \multirow{3}{*}{0} \\
DGP3.2& & & -$h_i$ & $h_i$&&\\
DGP3.3 & & & $h_i$ & $h_i$&&\\
& & & & \\
\bottomrule
\end{tabular}
\caption{DGPs for Simulations. Beta($\alpha$,$\beta$) are the beta distributions with shape parameters $\alpha$ and $\beta$. Tri(0,1) is the symmetric triangular distribution on the unit interval. Norm(0,1) is the standard normal distribution. Uniform(0,1) is the uniform distribution on the unit interval. For Uniform(0,1) Squared, we square the quantiles of the uniformt distribution. For Uniform(0,1), Squared, Reversed, we reverse the order of the quantiles such that the 1st unit corresponds to the highest quantiles. $h_i$ is the studentized leverage ratio for the $i$th unit. It is computed as $h_i=\frac{v_i-\bar{v}}{\sigma_v}$, where $v_i=x_i'(\sum_{i=1}^nx_ix_i')^{-1}x_i$, $\bar{v}=\frac{1}{n}\sum_{i=1}^nv_i$ and $\sigma^2_v=\frac{1}{n-1}\sum_{i=1}^N(v_i-\bar{v}^2)$. }
\end{table}
\newpage
\begin{landscape}
\begin{figure}
    \centering
\includegraphics[scale=0.1]{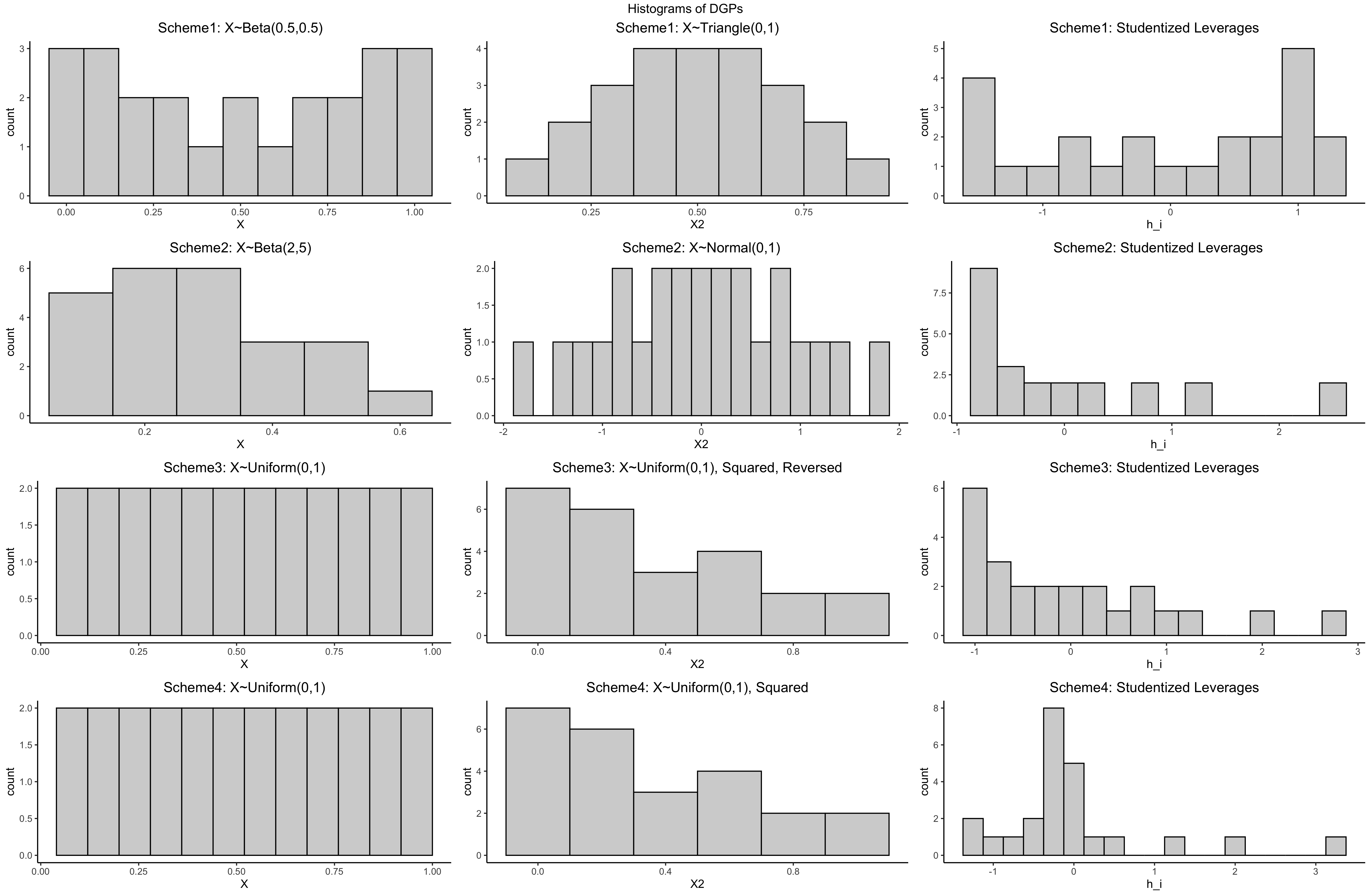}\\
    \caption{Histogram Plots for the DGPs. For each scheme, we plot the distributions of $X$, $X2$ and the studentized leverages $h_i$. Please refer to the note under Table 1 or Table 5 for the definition of $h_i$s.}
    \label{fig:my_label}
\end{figure}
\end{landscape}
\newpage
\subsection{Scheme 1}

\begin{table}[H]
\centering
\begin{threeparttable}
\caption{Confidence Interval Statistics for Scheme 1 using HC2}
\begin{tabular}{lcccccc}
\toprule 
& \multicolumn{5}{c}{\textbf{ATE Estimators}} \\ 
\cmidrule(l){2-7} 
  & \multicolumn{3}{c}{HC2} & \multicolumn{3}{c}{BC-HC2} \\ 
  &  Z & Student-t & Satterthwaite & Z & Student-t & Satterthwaite  \\ 
  \hline
  \hline
&&&&&\\
 \textbf{DGP1.1, N =24}& & & & & \\
\textit{Coverage, Percentage}& & & & & \\
Non-interacted (Debiased) & 95.30 & 96.00 & 96.60 & 95.40 & 96.10 & 96.70 \\ 
  Interacted (Debiased) & 94.60 & 95.30 & 97.00 & 95.00 & 95.70 & 97.30 \\ 
&&&&&&\\
\textit{CI Width, Average}& & & & & \\
Non-interacted (Debiased) & 2.68 & 2.83 & 3.00 & 2.70 & 2.85 & 3.02 \\ 
  Interacted (Debiased)& 3.20 & 3.38 & 4.13 & 3.26 & 3.44 & 4.28 \\ 
 &&&&&&\\
\textit{CI Width, Median}& & & & & \\
Non-interacted (Debiased) & 2.71 & 2.86 & 3.01 & 2.71 & 2.86 & 3.01 \\ 
   Interacted (Debiased)& 3.02 & 3.19 & 3.67 & 3.02 & 3.19 & 3.67 \\ 
&&&&&\\
 \textbf{DGP1.2, N =24}& & & & & \\
 \textit{Coverage, Percentage}& & & & & \\
Non-interacted (Debiased) & 99.90 & 100.00 & 100.00 & 99.90 & 100.00 & 100.00 \\ 
  Interacted (Debiased) & 99.90 & 99.90 & 100.00 & 100.00 & 100.00 & 100.00 \\ 
&&&&&&\\
\textit{CI Width, Average}& & & & & \\
Non-interacted (Debiased) &1.84 & 1.95 & 2.06 & 1.85 & 1.95 & 2.06 \\ 
   Interacted (Debiased)& 1.87 & 1.98 & 2.42 & 1.90 & 2.00 & 2.48 \\ 
&&&&&&\\
\textit{CI Width, Median}& & & & & \\
 Non-interacted (Debiased) & 1.83 & 1.93 & 2.03 & 1.83 & 1.93 & 2.03 \\ 
  Interacted (Debiased) & 1.77 & 1.87 & 2.16 & 1.77 & 1.87 & 2.16 \\ 
&&&&&&\\
 \textbf{DGP1.3, N =24}& & & & & \\
 \textit{Coverage, Percentage}& & & & & \\
Non-interacted (Debiased) & 93.50 & 94.60 & 95.60 & 93.60 & 94.70 & 95.60 \\ 
   Interacted (Debiased)& 93.80 & 94.80 & 97.00 & 94.10 & 95.00 & 97.10 \\ 
&&&&&&\\
\textit{CI Width, Average}& & & & & \\
Non-interacted (Debiased) & 1.63 & 1.72 & 1.82 & 1.63 & 1.72 & 1.82 \\ 
    Interacted (Debiased) & 1.87 & 1.98 & 2.42 & 1.91 & 2.01 & 2.49 \\
  &&&&&&\\
\textit{CI Width, Median}& & & & & \\
Non-interacted (Debiased) & 1.62 & 1.71 & 1.80 & 1.62 & 1.71 & 1.80 \\ 
   Interacted (Debiased) & 1.77 & 1.87 & 2.16 & 1.77 & 1.87 & 2.16 \\
&&&&&&\\
   \hline
\end{tabular}
 \begin{tablenotes}
      \small
\item Note: CI Coverage measures the empirical coverage rates of nominal 95 percent confidence intervals.  Both Student-t and Satterthwaite adjustments are calculated using the R-package \texttt{clubSandwich}\cite{clubSandwich}.
\end{tablenotes}
\end{threeparttable}

\end{table}

\begin{table}[H]
\centering
\begin{threeparttable}
\caption{Confidence Interval Statistics for Scheme 1 using HC3}
\begin{tabular}{lcccccc}
\toprule 
& \multicolumn{5}{c}{\textbf{ATE Estimators}} \\ 
\cmidrule(l){2-7} 
  & \multicolumn{3}{c}{HC3} & \multicolumn{3}{c}{BC-HC3} \\ 
  &  Z & Student-t & Satterthwaite & Z & Student-t & Satterthwaite  \\ 
  \hline
  \hline
&&&&&\\
 \textbf{DGP1.1, N =24}& & & & & \\
\textit{Coverage, Percentage}& & & & & \\
Non-interacted (Debiased) & 96.70 & 97.20 & 97.70 & 96.70 & 97.20 & 97.80 \\ 
    Interacted (Debiased)& 97.40 & 97.70 & 98.80 & 97.60 & 97.90 & 99.00 \\ 
 &&&&&\\
\textit{CI Width, Average}& & & & & \\
Non-interacted (Debiased) & 3.00 & 3.16 & 3.37 & 3.01 & 3.18 & 3.39 \\ 
     Interacted (Debiased) & 4.58 & 4.83 & 7.60 & 4.60 & 4.85 & 7.83 \\ 
&&&&&\\
 \textit{CI Width, Median}& & & & & \\
Non-interacted (Debiased) & 3.02 & 3.18 & 3.38 & 3.03 & 3.20 & 3.40 \\ 
  Interacted (Debiased) & 3.94 & 4.16 & 5.18 & 3.90 & 4.11 & 5.12 \\
&&&&&\\
 \textbf{DGP1.2, N =24}& & & & & \\ 
\textit{Coverage, Percentage}& & & & & \\
Non-interacted (Debiased) &100.00 & 100.00 & 100.00 & 100.00 & 100.00 & 100.00 \\ 
  Interacted (Debiased) & 100.00 & 100.00 & 100.00 & 100.00 & 100.00 & 100.00 \\
&&&&&\\
\textit{CI Width, Average}& & & & & \\
Non-interacted (Debiased) & 2.06 & 2.17 & 2.32 & 2.06 & 2.17 & 2.32 \\ 
  Interacted (Debiased) & 2.54 & 2.68 & 4.16 & 2.54 & 2.68 & 4.25 \\ 
 &&&&&\\
 \textit{CI Width, Median}& & & & & \\
Non-interacted (Debiased) & 2.04 & 2.15 & 2.28 & 2.04 & 2.15 & 2.28 \\ 
 I.Nobias & 2.21 & 2.33 & 2.91 & 2.19 & 2.31 & 2.88 \\
&&&&&\\
 \textbf{DGP1.3, N =24}& & & & & \\ 
\textit{Coverage, Percentage}& & & & & \\
Non-interacted (Debiased) &95.30 & 96.10 & 97.00 & 95.30 & 96.20 & 97.00 \\ 
  Interacted (Debiased) & 96.80 & 97.40 & 98.90 & 97.00 & 97.50 & 99.00 \\ 
&&&&&\\
\textit{CI Width, Average}& & & & & \\
Non-interacted (Debiased) & 1.78 & 1.88 & 2.01 & 1.79 & 1.89 & 2.01 \\ 
  Interacted (Debiased)& 2.54 & 2.68 & 4.16 & 2.57 & 2.71 & 4.33 \\ 
&&&&&\\
 \textit{CI Width, Median}& & & & & \\
Non-interacted (Debiased) & 1.77 & 1.87 & 1.98 & 1.77 & 1.87 & 1.99 \\ 
  Interacted (Debiased)& 2.21 & 2.33 & 2.91 & 2.20 & 2.32 & 2.89 \\ 
   \hline
\end{tabular}
 \begin{tablenotes}
      \small
\item Note: CI Coverage measures the empirical coverage rates of nominal 95 percent confidence intervals. HC3 refers to the approximate jackknife robust standard error estimator\cite{mackinnon1985some}. Both Student-t and Satterthwaite adjustments are calculated using the R-package \texttt{clubSandwich}\cite{clubSandwich}.
\end{tablenotes}
\end{threeparttable}

\end{table}

\subsection{Scheme 2}

\begin{table}[H]
\centering
\begin{threeparttable}
\caption{Confidence Interval Statistics for Scheme 2 using HC2}
\begin{tabular}{lcccccc}
\toprule 
& \multicolumn{5}{c}{\textbf{ATE Estimators}} \\ 
\cmidrule(l){2-7} 
  & \multicolumn{3}{c}{HC2} & \multicolumn{3}{c}{BC-HC2} \\ 
  &  Z & Student-t & Satterthwaite & Z & Student-t & Satterthwaite  \\ 
  \hline
  \hline
&&&&&\\
 \textbf{DGP2.1, N =24}& & & & & \\
\textit{Coverage, Percentage}& & & & & \\
Non-interacted (Debiased) & 91.40 & 92.30 & 92.80 & 91.40 & 92.30 & 92.80 \\ 
   Interacted (Debiased) & 44.70 & 47.00 & 54.80 & 85.40 & 87.60 & 93.20 \\ 
&&&&&&\\
\textit{CI Width, Average}& & & & & \\
  Non-interacted (Debiased) & 1.93 & 2.04 & 2.16 & 2.04 & 2.15 & 2.28 \\ 
  Interacted (Debiased)& 0.60 & 0.63 & 0.89 & 2.48 & 2.61 & 4.50 \\ 
 &&&&&&\\
\textit{CI Width, Median}& & & & & \\
Non-interacted (Debiased) & 1.95 & 2.06 & 2.17 & 1.95 & 2.06 & 2.17 \\ 
  Interacted (Debiased) & 0.51 & 0.54 & 0.67 & 0.51 & 0.54 & 0.67 \\ 
&&&&&\\
 \textbf{DGP2.2, N =24}& & & & & \\
 \textit{Coverage, Percentage}& & & & & \\
Non-interacted (Debiased) &96.30 & 96.70 & 96.90 & 96.10 & 96.50 & 96.80 \\ 
  Interacted (Debiased) & 58.90 & 61.40 & 72.40 & 95.30 & 96.70 & 99.60 \\ 
&&&&&&\\
\textit{CI Width, Average}& & & & & \\
Non-interacted (Debiased) &1.91 & 2.02 & 2.14 & 1.96 & 2.07 & 2.19 \\ 
  Interacted (Debiased) & 0.40 & 0.42 & 0.59 & 1.33 & 1.40 & 2.36 \\ 
&&&&&&\\
\textit{CI Width, Median}& & & & & \\
Non-interacted (Debiased) &1.94 & 2.05 & 2.16 & 1.94 & 2.05 & 2.16 \\ 
  Interacted (Debiased) & 0.38 & 0.40 & 0.48 & 0.38 & 0.40 & 0.48 \\ 
&&&&&&\\
 \textbf{DGP2.3, N =24}& & & & & \\
 \textit{Coverage, Percentage}& & & & & \\
Non-interacted (Debiased) & 62.50 & 65.40 & 68.30 & 76.50 & 80.90 & 85.00 \\ 
  Interacted (Debiased) & 54.20 & 57.00 & 67.80 & 87.40 & 89.60 & 94.40 \\ 
&&&&&&\\
\textit{CI Width, Average}& & & & & \\
Non-interacted (Debiased) & 0.37 & 0.39 & 0.41 & 0.45 & 0.48 & 0.51 \\ 
  Interacted (Debiased) & 0.40 & 0.42 & 0.59 & 1.33 & 1.41 & 2.38 \\ 
  &&&&&&\\
\textit{CI Width, Median}& & & & & \\
Non-interacted (Debiased) & 0.35 & 0.37 & 0.40 & 0.35 & 0.37 & 0.40 \\ 
  Interacted (Debiased) & 0.38 & 0.40 & 0.48 & 0.38 & 0.40 & 0.48 \\ 
&&&&&&\\
   \hline
\end{tabular}
 \begin{tablenotes}
      \small
\item Note: CI Coverage measures the empirical coverage rates of nominal 95 percent confidence intervals.  Both Student-t and Satterthwaite adjustments are calculated using the R-package \texttt{clubSandwich}\cite{clubSandwich}.
\end{tablenotes}
\end{threeparttable}

\end{table}

\begin{table}[H]
\centering
\begin{threeparttable}
\caption{Confidence Interval Statistics for Scheme 2 using HC3}
\begin{tabular}{lcccccc}
\toprule 
& \multicolumn{5}{c}{\textbf{ATE Estimators}} \\ 
\cmidrule(l){2-7} 
  & \multicolumn{3}{c}{HC3} & \multicolumn{3}{c}{BC-HC3} \\ 
  &  Z & Student-t & Satterthwaite & Z & Student-t & Satterthwaite  \\ 
  \hline
  \hline
&&&&&\\
 \textbf{DGP2.1, N =24}& & & & & \\
\textit{Coverage, Percentage}& & & & & \\
Non-interacted (Debiased) & 93.20 & 94.10 & 94.50 & 93.20 & 94.00 & 94.50 \\ 
  Interacted (Debiased) & 74.20 & 76.00 & 87.60 & 96.40 & 97.00 & 98.80 \\ 
 &&&&&\\
\textit{CI Width, Average}& & & & & \\
Non-interacted (Debiased)  & 2.37 & 2.50 & 2.67 & 2.44 & 2.57 & 2.76 \\ 
  Interacted (Debiased)  & 1.60 & 1.69 & 5.29 & 12.18 & 12.85 & 57.49 \\ 
&&&&&\\
 \textit{CI Width, Median}& & & & & \\
Non-interacted (Debiased)  & 2.37 & 2.50 & 2.66 & 2.44 & 2.57 & 2.73 \\ 
  Interacted (Debiased) & 1.07 & 1.13 & 1.85 & 2.11 & 2.23 & 3.94 \\ 
&&&&&\\
 \textbf{DGP2.2, N =24}& & & & & \\ 
\textit{Coverage, Percentage}& & & & & \\
Non-interacted (Debiased)  & 97.00 & 97.30 & 97.40 & 96.80 & 97.10 & 97.30 \\ 
  Interacted (Debiased)  & 85.30 & 86.70 & 96.10 & 99.90 & 99.90 & 100.00 \\ 
&&&&&\\
\textit{CI Width, Average}& & & & & \\
Non-interacted (Debiased)  & 2.34 & 2.47 & 2.64 & 2.35 & 2.48 & 2.66 \\ 
  Interacted (Debiased)  & 0.93 & 0.98 & 2.88 & 6.26 & 6.60 & 29.16 \\ 
 &&&&&\\
 \textit{CI Width, Median}& & & & & \\
Non-interacted (Debiased)  & 2.33 & 2.46 & 2.62 & 2.37 & 2.50 & 2.65 \\ 
  Interacted (Debiased)  & 0.67 & 0.71 & 1.16 & 1.18 & 1.25 & 2.21 \\ 
&&&&&\\
 \textbf{DGP2.3, N =24}& & & & & \\ 
\textit{Coverage, Percentage}& & & & & \\
Non-interacted (Debiased) & 71.60 & 74.60 & 78.10 & 86.90 & 90.40 & 93.30 \\ 
  Interacted (Debiased)  & 83.30 & 84.70 & 94.50 & 98.20 & 98.80 & 99.60 \\ 
&&&&&\\
\textit{CI Width, Average}& & & & & \\
Non-interacted (Debiased) & 0.45 & 0.47 & 0.50 & 0.54 & 0.57 & 0.61 \\ 
  Interacted (Debiased)  & 0.93 & 0.98 & 2.88 & 6.28 & 6.63 & 29.31 \\ 
&&&&&\\
 \textit{CI Width, Median}& & & & & \\
Non-interacted (Debiased) & 0.42 & 0.45 & 0.48 & 0.50 & 0.52 & 0.55 \\ 
  Interacted (Debiased)  & 0.67 & 0.71 & 1.16 & 1.18 & 1.25 & 2.24 \\  
   \hline
\end{tabular}
 \begin{tablenotes}
      \small
\item Note: CI Coverage measures the empirical coverage rates of nominal 95 percent confidence intervals. HC3 refers to the approximate jackknife robust standard error estimator\cite{mackinnon1985some}. Both Student-t and Satterthwaite adjustments are calculated using the R-package \texttt{clubSandwich}\cite{clubSandwich}.
\end{tablenotes}
\end{threeparttable}

\end{table}

\subsection{Scheme 3}
\begin{table}[H]
\centering
\caption{Simulation Results for Scheme 3}    
\begin{threeparttable}
\begin{tabular}{lccccc}
\toprule 
& \multicolumn{5}{c}{\textbf{ATE Estimators}} \\ 
\cmidrule(l){2-6} 
 & Unadjusted & \multicolumn{2}{c}{OLS} & \multicolumn{2}{c}{Debiased} \\ 
  & & Non-Int. & Interacted & Non-Int. & Interacted \\ 
\midrule 
\textbf{DGP3.1, N =24}& & & & & \\
Bias & 0.000 & -0.144 & -0.004 & 0.000 & 0.000 \\
  SD & 0.577 & 0.362 & 0.281 & 0.433 & 0.387 \\ 
  MSE & 0.577 & 0.390 & 0.281 & 0.433 & 0.387 \\ 
 \makecell[l]{CI Coverage  (HC2, Student-t)}& 0.943 & 0.949 & 0.843 & 0.959 & 0.712 \\
   \makecell[l]{CI Coverage  (HC2, Satterthwaite)} & 0.948 & 0.959 & 0.917 & 0.965 & 0.810 \\
  \makecell[l]{CI Coverage  (BC-HC2, Student-t)} &  &  &  & 0.960 & 0.876 \\ 
  \makecell[l]{CI Coverage  (BC-HC2, Satterthwaite)} &  &  &  & 0.966 & 0.936 \\ 
&&&&&\\
\textbf{DGP3.2, N =24}& & & & & \\
Bias & 0.000 & 0.144 & 0.003 & 0.000 & 0.000 \\ 
  SD & 0.144 & 0.342 & 0.129 & 0.316 & 0.198 \\ 
  MSE & 0.144 & 0.371 & 0.129 & 0.316 & 0.198 \\ 
 \makecell[l]{CI Coverage  (HC2, Student-t)} & 1.000 & 0.963 & 0.927 & 0.983 & 0.804 \\
  \makecell[l]{CI Coverage  (HC2, Satterthwaite)}  &1.000 &0.967 & 0.980  & 0.985 & 0.906 \\
  \makecell[l]{CI Coverage  (BC-HC2, Student-t)} &  &  & & 0.981 & 0.959 \\ 
 \makecell[l]{CI Coverage  (BC-HC2, Satterthwaite)}  & & &  & 0.983 & 0.992 \\
&&&&&\\
 \textbf{DGP3.3, N =24}& & & & & \\
  Bias & 0.000 & 0.000 & -0.001 & 0.000 & 0.000 \\ 
  SD & 0.433 & 0.109 & 0.164 & 0.168 & 0.210 \\ 
  MSE & 0.433 & 0.109 & 0.164 & 0.168 & 0.210 \\ 
 \makecell[l]{CI Coverage  (HC2, Student-t)}& 0.941 & 0.942 & 0.857 & 0.817 & 0.750 \\
  \makecell[l]{CI Coverage (HC2, Satterthwaite)}  &0.948 &0.954 & 0.936  & 0.841 & 0.859 \\
  \makecell[l]{CI Coverage  (BC-HC2, Student-t)} &  & &  & 0.870 & 0.874 \\
 \makecell[l]{CI Coverage  (BC-HC2, Satterthwaite)} &  &&  & 0.896 & 0.939 \\
   \hline  
\end{tabular}
 \begin{tablenotes}
      \small
\item Note: CI Coverage measures the empirical coverage rates of nominal 95 percent confidence intervals. The unit of CI coverage is $\times$100 percentage points. CI Coverage (HC2, Student-t) is calculated using the original OLS residuals. CI Coverage (BC-HC2, Student-t) and CI Coverage (BC-HC2, Satterthwaite) are calculated using the recomputed OLS residuals. Both Student-t and Satterthwaite adjustments are calculated using the R-package \texttt{clubSandwich}\cite{clubSandwich}.
\end{tablenotes}
\end{threeparttable}
\end{table}
\begin{table}[H]
\centering
\begin{threeparttable}
\caption{Confidence Interval Statistics for Scheme 3 using HC2}
\begin{tabular}{lcccccc}
\toprule 
& \multicolumn{5}{c}{\textbf{ATE Estimators}} \\ 
\cmidrule(l){2-7} 
  & \multicolumn{3}{c}{HC2} & \multicolumn{3}{c}{BC-HC2} \\ 
  &  Z & Student-t & Satterthwaite & Z & Student-t & Satterthwaite  \\ 
  \hline
  \hline
&&&&&\\
 \textbf{DGP3.1, N =24}& & & & & \\
\textit{Coverage, Percentage}& & & & & \\
Non-interacted (Debiased) & 95.000 & 95.900 & 96.500 & 95.200 & 96.000 & 96.600 \\ 
  Interacted (Debiased) & 69.100 & 71.200 & 81.000 & 85.700 & 87.600 & 93.600 \\ 
&&&&&&\\
\textit{CI Width, Average}& & & & & \\
Non-interacted (Debiased) & 2.130 & 2.248 & 2.385 & 2.181 & 2.302 & 2.445 \\ 
  Interacted (Debiased) & 0.793 & 0.837 & 1.095 & 1.502 & 1.585 & 2.389 \\ 
 &&&&&&\\
\textit{CI Width, Median}& & & & & \\
Non-interacted (Debiased) &  2.160  & 2.280 &         2.406  & 2.205 &   2.327 &         2.449\\
  Interacted (Debiased) & 0.778  & 0.821    &      1.006 &  0.978  &  1.033 &         1.225\\
&&&&&\\
 \textbf{DGP3.2, N =24}& & & & & \\
 \textit{Coverage, Percentage}& & & & & \\
Non-interacted (Debiased) & 97.900 & 98.300 & 98.500 & 97.600 & 98.100 & 98.300 \\ 
  Interacted (Debiased) & 78.400 & 80.400 & 90.600 & 94.900 & 95.900 & 99.200 \\
&&&&&&\\
\textit{CI Width, Average}& & & & & \\
Non-interacted (Debiased) & 2.110 & 2.227 & 2.362 & 2.125 & 2.242 & 2.381 \\ 
  Interacted (Debiased) & 0.487 & 0.514 & 0.672 & 0.852 & 0.899 & 1.338 \\
&&&&&&\\
\textit{CI Width, Median}& & & & & \\
Non-interacted (Debiased) &  2.129  & 2.247 &        2.368  & 2.150   & 2.269       &  2.389\\
  Interacted (Debiased) &   0.478 &  0.505 &        0.613 & 0.575  & 0.607  &        0.721\\
&&&&&&\\
 \textbf{DGP3.3, N =24}& & & & & \\
 \textit{Coverage, Percentage}& & & & & \\
Non-interacted (Debiased)& 79.200 & 81.700 & 84.100 & 84.400 & 87.000 & 89.600 \\ 
  Interacted (Debiased)& 72.600 & 75.000 & 85.900 & 85.400 & 87.400 & 93.900 \\
&&&&&&\\
\textit{CI Width, Average}& & & & & \\
Non-interacted (Debiased) & 0.422 & 0.446 & 0.473 & 0.457 & 0.483 & 0.513 \\ 
  Interacted (Debiased) & 0.487 & 0.514 & 0.672 & 0.816 & 0.862 & 1.277 \\ 
  &&&&&&\\
\textit{CI Width, Median}& & & & & \\
Non-interacted (Debiased) &  0.417 &   0.440    &     0.465  & 0.435 &   0.459     &    0.483\\
  Interacted (Debiased) &  0.478 &  0.505    &     0.613 &  0.554 &  0.584   &      0.695\\
&&&&&&\\
   \hline
\end{tabular}
 \begin{tablenotes}
      \small
\item Note: CI Coverage measures the empirical coverage rates of nominal 95 percent confidence intervals.  Both Student-t and Satterthwaite adjustments are calculated using the R-package \texttt{clubSandwich}\cite{clubSandwich}. The number of simulation is ${24 \choose 8}$.
\end{tablenotes}
\end{threeparttable}

\end{table}
\begin{table}[H]
\centering
\begin{threeparttable}
\caption{Confidence Interval Statistics for Scheme 3 using HC3}
\begin{tabular}{lcccccc}
\toprule 
& \multicolumn{5}{c}{\textbf{ATE Estimators}} \\ 
\cmidrule(l){2-7} 
  & \multicolumn{3}{c}{HC3} & \multicolumn{3}{c}{BC-HC3} \\ 
  &  Z & Student-t & Satterthwaite & Z & Student-t & Satterthwaite  \\ 
  \hline
  \hline
&&&&&\\
 \textbf{DGP3.1, N =24}& & & & & \\
\textit{Coverage, Percentage}& & & & & \\
Non-interacted (Debiased)& 96.700 & 97.200 & 97.500 & 96.800 & 97.300 & 97.600 \\ 
  Interacted (Debiased)& 87.900 & 89.000 & 95.100 & 95.800 & 96.600 & 99.100 \\ 
 &&&&&\\
\textit{CI Width, Average}& & & & & \\
Non-interacted (Debiased) & 2.514 & 2.653 & 2.842 & 2.557 & 2.699 & 2.895 \\ 
  Interacted (Debiased)& 1.793 & 1.893 & 4.927 & 4.118 & 4.347 & 14.929 \\
&&&&&\\
 \textit{CI Width, Median}& & & & & \\
Non-interacted (Debiased)& 2.512 & 2.651 & 2.816 & 2.551 & 2.692 & 2.854 \\ 
  Interacted (Debiased) & 1.385 & 1.462 & 2.136 & 1.606 & 1.695 & 2.568 \\
&&&&&\\
 \textbf{DGP3.2, N =24}& & & & & \\ 
\textit{Coverage, Percentage}& & & & & \\
Non-interacted (Debiased)& 98.500 & 98.700 & 98.900 & 98.300 & 98.500 & 98.700 \\
  Interacted (Debiased) & 93.000 & 93.800 & 98.600 & 99.000 & 99.300 & 100.000 \\
&&&&&\\
\textit{CI Width, Average}& & & & & \\
Non-interacted (Debiased) & 2.490 & 2.628 & 2.815 & 2.491 & 2.629 & 2.818 \\ 
  Interacted (Debiased) & 0.989 & 1.044 & 2.624 & 2.280 & 2.406 & 8.289 \\ 
 &&&&&\\
 \textit{CI Width, Median}& & & & & \\
Non-interacted (Debiased) & 2.487 & 2.625 & 2.782 & 2.498 & 2.636 & 2.793 \\ 
  Interacted (Debiased)& 0.787 & 0.831 & 1.219 & 0.906 & 0.956 & 1.447 \\ 
&&&&&\\
 \textbf{DGP3.3, N =24}& & & & & \\ 
\textit{Coverage, Percentage}& & & & & \\
Non-interacted (Debiased) & 85.000 & 87.000 & 89.300 & 90.100 & 92.100 & 94.200 \\
  Interacted (Debiased) & 90.000 & 91.300 & 97.200 & 95.600 & 96.500 & 99.100 \\
&&&&&\\
\textit{CI Width, Average}& & & & & \\
Non-interacted (Debiased)& 0.489 & 0.516 & 0.552 & 0.526 & 0.555 & 0.596 \\ 
  Interacted (Debiased) & 0.989 & 1.044 & 2.624 & 2.133 & 2.251 & 7.610 \\
&&&&&\\
 \textit{CI Width, Median}& & & & & \\
Non-interacted (Debiased)& 0.479 & 0.506 & 0.538 & 0.499 & 0.526 & 0.558 \\ 
  Interacted (Debiased) & 0.787 & 0.831 & 1.219 & 0.883 & 0.932 & 1.417 \\ 
   \hline
\end{tabular}
 \begin{tablenotes}
      \small
\item Note: CI Coverage measures the empirical coverage rates of nominal 95 percent confidence intervals. HC3 refers to the approximate jackknife robust standard error estimator\cite{mackinnon1985some}. Both Student-t and Satterthwaite adjustments are calculated using the R-package \texttt{clubSandwich}\cite{clubSandwich}. The number of simulation is ${24 \choose 8}$.
\end{tablenotes}
\end{threeparttable}
\end{table}
\subsection{Scheme 4}
\begin{table}[H]
\centering
\caption{Simulation Results for Scheme 4}    
\begin{threeparttable}
\begin{tabular}{lcccccc}
\toprule 
& \multicolumn{5}{c}{\textbf{ATE Estimators}} \\ 
\cmidrule(l){2-6} 
 & Unadjusted & \multicolumn{2}{c}{OLS} & \multicolumn{2}{c}{Debiased} \\ 
  & & Non-Int. & Interacted & Non-Int. & Interacted \\ 
  \hline
  \hline
&&&&&\\
 \textbf{DGP4.1, N =24}& & & & & \\
\textit{Coverage, Percentage}& & & & & \\
Bias & 0.000 & -0.060 & -0.042 & 0.000 & 0.000 \\
  SD & 0.577 & 0.438 & 0.692 & 0.524 & 0.570 \\ 
  MSE & 0.577 & 0.442 & 0.693 & 0.524 & 0.570 \\ 
 \makecell[l]{CI Coverage  (HC2, Student-t)} & 0.862 & 0.849 & 0.801 & 0.832 & 0.803 \\ 
 \makecell[l]{CI Coverage  (HC2, Satterthwaite)} & 0.872 & 0.858  & 0.856 & 0.843 & 0.860 \\ 
  \makecell[l]{CI Coverage  (BC-HC2, Student-t)} &   && & 0.836 & 0.841 \\ 
\makecell[l]{CI Coverage  (BC-HC2, Satterthwaite)} &  &  &  & 0.847 & 0.890 \\ 
&&&&&\\
 \textbf{DGP4.2, N =24}& & & & & \\
 \textit{Coverage, Percentage}& & & & & \\
 Bias & -0.000 & 0.061 & 0.028 & 0.000 & 0.000 \\ 
  SD & 0.144 & 0.325 & 0.317 & 0.303 & 0.256 \\ 
  MSE & 0.144 & 0.331 & 0.318 & 0.303 & 0.256 \\ 
 \makecell[l]{CI Coverage  (HC2, Student-t)} & 1.000 & 0.933 & 0.907 & 0.954 & 0.930 \\ 
 \makecell[l]{CI Coverage  (HC2, Satterthwaite)} & 1.000 & 0.940  &  0.974 & 0.960 & 0.991 \\
  \makecell[l]{CI Coverage  (BC-HC2, Student-t)} &  &  &  & 0.954 & 0.952 \\ 
\makecell[l]{CI Coverage  (BC-HC2, Satterthwaite)} &  &  &  & 0.961 & 0.994 \\
&&&&&&\\
 \textbf{DGP4.3, N =24}& & & & & \\
 \textit{Coverage, Percentage}& & & & & \\
  Bias & -0.000 & 0.001 & -0.014 & 0.000 & 0.000 \\ 
  SD & 0.433 & 0.272 & 0.405 & 0.329 & 0.355 \\
  MSE & 0.433 & 0.272 & 0.405 & 0.329 & 0.355 \\ 
 \makecell[l]{CI Coverage  (HC2, Student-t)} & 0.952 & 0.948 & 0.828 & 0.895 & 0.851 \\ 
\makecell[l]{CI Coverage  (HC2, Satterthwaite)} & 0.961 & 0.960  & 0.921  & 0.916 & 0.926 \\
  \makecell[l]{CI Coverage  (BC-HC2, Student-t)}&  &  & & 0.907 & 0.868 \\ 
\makecell[l]{CI Coverage  (BC-HC2, Satterthwaite)} &  &  &  & 0.927 & 0.929 \\ 
&&&&&&\\
   \hline
\end{tabular}
 \begin{tablenotes}
     \small
\item Note: CI Coverage measures the empirical coverage rates of nominal 95 percent confidence intervals. The unit of CI coverage is $\times$100 percentage points. CI Coverage (HC2, Student-t) is calculated using the original OLS residuals. CI Coverage (BC-HC2, Student-t) and CI Coverage (BC-HC2, Satterthwaite) are calculated using the recomputed OLS residuals. Both Student-t and Satterthwaite adjustments are calculated using the R-package \texttt{clubSandwich}\cite{clubSandwich}.
\end{tablenotes}
\end{threeparttable}
\end{table}
\begin{table}[H]
\centering
\begin{threeparttable}
\caption{Confidence Interval Statistics for Scheme 4 using HC2}
\begin{tabular}{lcccccc}
\toprule 
& \multicolumn{5}{c}{\textbf{ATE Estimators}} \\ 
\cmidrule(l){2-7} 
  & \multicolumn{3}{c}{HC2} & \multicolumn{3}{c}{BC-HC2} \\ 
  &  Z & Student-t & Satterthwaite & Z & Student-t & Satterthwaite  \\ 
  \hline
  \hline
&&&&&\\
 \textbf{DGP4.1, N =24}& & & & & \\ 
\textit{Coverage, Percentage}& & & & & \\
Non-interacted (Debiased) & 82.100 & 83.200 & 84.300 & 82.400 & 83.600 & 84.700 \\ 
  Interacted (Debiased)& 79.000 & 80.300 & 86.000 & 82.800 & 84.100 & 89.000 \\
&&&&&\\
\textit{CI Width, Average}& & & & & \\
Non-interacted (Debiased)& 2.101 & 2.217 & 2.355 & 2.135 & 2.254 & 2.395 \\ 
  Interacted (Debiased) & 1.898 & 2.004 & 2.625 & 2.412 & 2.546 & 3.616 \\ 
 &&&&&\\
 \textit{CI Width, Median}& & & & & \\
Non-interacted (Debiased) & 2.166 & 2.286 & 2.419 & 2.196 & 2.318 & 2.449 \\ 
  Interacted (Debiased) & 1.822 & 1.923 & 2.293 & 1.952 & 2.060 & 2.458 \\ 
&&&&&\\
 \textbf{DGP4.2, N =24}& & & & & \\
\textit{Coverage, Percentage}& & & & & \\
Non-interacted (Debiased) & 94.600 & 95.400 & 96.000 & 94.600 & 95.400 & 96.100 \\ 
  Interacted (Debiased) & 92.000 & 93.000 & 99.100 & 94.200 & 95.200 & 99.400 \\ 
 &&&&&\\
\textit{CI Width, Average}& & & & & \\
Non-interacted (Debiased) & 1.983 & 2.093 & 2.220 & 1.990 & 2.101 & 2.229 \\ 
  Interacted (Debiased)& 1.194 & 1.260 & 1.655 & 1.424 & 1.503 & 2.103 \\
&&&&&\\
 \textit{CI Width, Median}& & & & & \\
Non-interacted (Debiased)& 2.041 & 2.154 & 2.273 & 2.051 & 2.164 & 2.283 \\ 
  Interacted (Debiased) & 1.155 & 1.219 & 1.460 & 1.193 & 1.260 & 1.497 \\ 
&&&&&\\
 \textbf{DGP4.3, N =24}& & & & & \\ 
\textit{Coverage, Percentage}& & & & & \\
Non-interacted (Debiased)& 87.500 & 89.500 & 91.600 & 88.700 & 90.700 & 92.700 \\ 
  Interacted (Debiased) & 83.100 & 85.100 & 92.600 & 85.000 & 86.800 & 92.900 \\
&&&&&\\
\textit{CI Width, Average}& & & & & \\
Non-interacted (Debiased)& 1.051 & 1.109 & 1.176 & 1.068 & 1.127 & 1.197 \\ 
  Interacted (Debiased)s & 1.194 & 1.260 & 1.655 & 1.443 & 1.523 & 2.132 \\
&&&&&\\
 \textit{CI Width, Median}& & & & & \\
Non-interacted (Debiased)& 1.028 & 1.085 & 1.148 & 1.050 & 1.108 & 1.170 \\ 
  Interacted (Debiased) & 1.155 & 1.219 & 1.460 & 1.204 & 1.271 & 1.505 \\ 
   \hline
\end{tabular}
 \begin{tablenotes}
      \small
\item Note: CI Coverage measures the empirical coverage rates of nominal 95 percent confidence intervals. Both Student-t and Satterthwaite adjustments are calculated using the R-package \texttt{clubSandwich}\cite{clubSandwich}.
\end{tablenotes}
\end{threeparttable}
\end{table}
\begin{table}[H]
\centering
\begin{threeparttable}
\caption{Confidence Interval Statistics for Scheme 4 using HC3}
\begin{tabular}{lcccccc}
\toprule 
& \multicolumn{5}{c}{\textbf{ATE Estimators}} \\ 
\cmidrule(l){2-7} 
  & \multicolumn{3}{c}{HC3} & \multicolumn{3}{c}{BC-HC3} \\ 
  &  Z & Student-t & Satterthwaite & Z & Student-t & Satterthwaite  \\ 
  \hline
  \hline
&&&&&\\
 \textbf{DGP4.1, N =24}& & & & & \\ 
\textit{Coverage, Percentage}& & & & & \\
Non-interacted (Debiased) & 85.400 & 86.500 & 87.600 & 85.600 & 86.700 & 87.900 \\ 
  Interacted (Debiased) & 89.800 & 90.500 & 95.100 & 92.200 & 92.800 & 96.600 \\ 
&&&&&\\
\textit{CI Width, Average}& & & & & \\
Non-interacted (Debiased) & 2.533 & 2.673 & 2.866 & 2.565 & 2.707 & 2.905 \\ 
  Interacted (Debiased)& 4.225 & 4.459 & 11.551 & 5.853 & 6.178 & 18.780 \\
 &&&&&\\
 \textit{CI Width, Median}& & & & & \\
Non-interacted (Debiased) & 2.561 & 2.703 & 2.880 & 2.595 & 2.739 & 2.913 \\ 
  Interacted (Debiased) & 3.143 & 3.317 & 4.784 & 3.276 & 3.458 & 5.019 \\ 
&&&&&\\
 \textbf{DGP4.2, N =24}& & & & & \\
\textit{Coverage, Percentage}& & & & & \\
Non-interacted (Debiased) & 96.300 & 97.000 & 97.600 & 96.500 & 97.100 & 97.800 \\ 
  Interacted (Debiased) & 96.900 & 97.400 & 100.000 & 98.400 & 98.800 & 100.000 \\  
 &&&&&\\
\textit{CI Width, Average}& & & & & \\
Non-interacted (Debiased) & 2.382 & 2.514 & 2.693 & 2.388 & 2.521 & 2.701 \\ 
  Interacted (Debiased) & 2.363 & 2.494 & 6.239 & 3.118 & 3.291 & 9.625 \\
&&&&&\\
 \textit{CI Width, Median}& & & & & \\
Non-interacted (Debiased)& 2.425 & 2.559 & 2.715 & 2.432 & 2.567 & 2.725 \\ 
  Interacted (Debiased)& 1.831 & 1.932 & 2.824 & 1.872 & 1.976 & 2.895 \\ 
&&&&&\\
 \textbf{DGP4.3, N =24}& & & & & \\ 
\textit{Coverage, Percentage}& & & & & \\
Non-interacted (Debiased) & 92.100 & 93.600 & 95.400 & 93.100 & 94.500 & 96.100 \\ 
  Interacted (Debiased) & 93.500 & 94.500 & 98.300 & 94.200 & 95.100 & 98.500 \\
&&&&&\\
\textit{CI Width, Average}& & & & & \\
Non-interacted (Debiased)& 1.213 & 1.281 & 1.370 & 1.228 & 1.296 & 1.388 \\ 
  Interacted (Debiased)& 2.363 & 2.494 & 6.239 & 3.258 & 3.438 & 10.277 \\
&&&&&\\
 \textit{CI Width, Median}& & & & & \\
Non-interacted (Debiased) & 1.174 & 1.239 & 1.320 & 1.198 & 1.264 & 1.343 \\ 
  Interacted (Debiased)& 1.831 & 1.932 & 2.824 & 1.891 & 1.996 & 2.930 \\ 
   \hline
\end{tabular}
 \begin{tablenotes}
      \small
\item Note: CI Coverage measures the empirical coverage rates of nominal 95 percent confidence intervals. HC3 refers to the approximate jackknife robust standard error estimator\cite{mackinnon1985some}. Both Student-t and Satterthwaite adjustments are calculated using the R-package \texttt{clubSandwich}\cite{clubSandwich}.
\end{tablenotes}
\end{threeparttable}
\end{table}
\bibliographystyle{alpha} 
\bibliography{ref}

\providecommand{\noopsort}[1]{}
\begin{thebibliography}{SNDS90}

\bibitem[AI17]{athey2017econometrics}
Susan Athey and Guido~W Imbens.
\newblock The econometrics of randomized experiments.
\newblock In {\em Handbook of economic field experiments}, volume~1, pages
  73--140. Elsevier, 2017.

\bibitem[ALO09]{angrist2009incentives}
Joshua Angrist, Daniel Lang, and Philip Oreopoulos.
\newblock Incentives and services for college achievement: Evidence from a
  randomized trial.
\newblock {\em American Economic Journal: Applied Economics}, 1(1):136--63,
  2009.

\bibitem[AM13]{aronow2013class}
Peter~M Aronow and Joel~A Middleton.
\newblock A class of unbiased estimators of the average treatment effect in
  randomized experiments.
\newblock {\em Journal of Causal Inference}, 1(1):135--154, 2013.

\bibitem[AP08]{angrist2008mostly}
Joshua~D Angrist and J{\"o}rn-Steffen Pischke.
\newblock {\em Mostly harmless econometrics}.
\newblock Princeton university press, 2008.

\bibitem[Bli73]{blinder1973wage}
Alan~S Blinder.
\newblock Wage discrimination: reduced form and structural estimates.
\newblock {\em Journal of Human resources}, pages 436--455, 1973.

\bibitem[BM02]{bell2002bias}
Robert~M Bell and Daniel~F McCaffrey.
\newblock Bias reduction in standard errors for linear regression with
  multi-stage samples.
\newblock {\em Survey Methodology}, 28(2):169--182, 2002.

\bibitem[DC18]{deaton2018understanding}
Angus Deaton and Nancy Cartwright.
\newblock Understanding and misunderstanding randomized controlled trials.
\newblock {\em Social Science \& Medicine}, 210:2--21, 2018.

\bibitem[DGK07]{duflo2007using}
Esther Duflo, Rachel Glennerster, and Michael Kremer.
\newblock Using randomization in development economics research: A toolkit.
\newblock {\em Handbook of development economics}, 4:3895--3962, 2007.

\bibitem[Dun12]{dunning2012natural}
Thad Dunning.
\newblock {\em Natural experiments in the social sciences: a design-based
  approach}.
\newblock Cambridge University Press, 2012.

\bibitem[Eic67]{eicker1967limit}
Friedhelm Eicker.
\newblock Limit theorems for regressions with unequal and dependent errors.
\newblock In {\em Proceedings of the Fifth Berkeley Symposium on Mathematical
  Statistics and Probability, Volume 1: Statistics}, pages 59--82. University
  of California Press, 1967.

\bibitem[Fre08a]{freedman2008A}
David~A Freedman.
\newblock \noopsort{a}on regression adjustments to experimental data.
\newblock {\em Advances in Applied Mathematics}, 40(2):180--193, 2008.

\bibitem[Fre08b]{freedman2008B}
David~A Freedman.
\newblock \noopsort{b}on regression adjustments in experiments with several
  treatments.
\newblock {\em The annals of applied statistics}, 2(1):176--196, 2008.

\bibitem[Gle17]{glennerster2017practicalities}
Rachel Glennerster.
\newblock The practicalities of running randomized evaluations: Partnerships,
  measurement, ethics, and transparency.
\newblock In {\em Handbook of economic field experiments}, volume~1, pages
  175--243. Elsevier, 2017.

\bibitem[Hub67]{Hube67}
Peter~J. Huber.
\newblock The behavior of maximum likelihood estimates under nonstandard
  condition.
\newblock In N.M. LeCam and J.~Neyman, editors, {\em Proceedings of the Fifth
  Berkeley Symposium on Mathematical Statistics and Probability}, Berkeley, CA,
  USA, 1967. University of California Press.

\bibitem[Imb10]{imbens2010better}
Guido~W Imbens.
\newblock Better late than nothing: Some comments on deaton (2009) and heckman
  and urzua (2009).
\newblock {\em Journal of Economic literature}, 48(2):399--423, 2010.

\bibitem[IR15]{imbens2015causal}
Guido~W Imbens and Donald~B Rubin.
\newblock {\em Causal inference in statistics, social, and biomedical
  sciences}.
\newblock Cambridge University Press, 2015.

\bibitem[Lin13a]{lin2013agnostic}
Winston Lin.
\newblock Agnostic notes on regression adjustments to experimental data:
  Reexamining freedman’s critique.
\newblock {\em The Annals of Applied Statistics}, 7(1):295--318, 2013.

\bibitem[Lin13b]{lin2013essays}
Winston Lin.
\newblock {\em Essays on Causal Inference in Randomized Experiments}.
\newblock University of California, Berkeley, 2013.

\bibitem[LR11]{list2011field}
John~A List and Imran Rasul.
\newblock Field experiments in labor economics.
\newblock In {\em Handbook of labor economics}, volume~4, pages 103--228.
  Elsevier, 2011.

\bibitem[MSY13]{miratrix2013adjusting}
Luke~W Miratrix, Jasjeet~S Sekhon, and Bin Yu.
\newblock Adjusting treatment effect estimates by post-stratification in
  randomized experiments.
\newblock {\em Journal of the Royal Statistical Society: Series B (Statistical
  Methodology)}, 75(2):369--396, 2013.

\bibitem[MW85]{mackinnon1985some}
James~G MacKinnon and Halbert White.
\newblock Some heteroskedasticity-consistent covariance matrix estimators with
  improved finite sample properties.
\newblock {\em Journal of econometrics}, 29(3):305--325, 1985.

\bibitem[Oax73]{oaxaca1973male}
Ronald Oaxaca.
\newblock Male-female wage differentials in urban labor markets.
\newblock {\em International economic review}, pages 693--709, 1973.

\bibitem[Pus21]{clubSandwich}
James Pustejovsky.
\newblock {\em clubSandwich: Cluster-Robust (Sandwich) Variance Estimators with
  Small-Sample Corrections}, 2021.
\newblock R package version 0.5.3.

\bibitem[Rec21]{BenRecht}
Ben Recht.
\newblock Effect size is significantly more important than statistical
  significance., 2021.

\bibitem[Rub90]{RUBIN1990279}
Donald~B. Rubin.
\newblock Formal modes of statistical inference for causal effects.
\newblock {\em Journal of Statistical Planning and Inference}, 25(3):279--292,
  1990.

\bibitem[SA12]{samii2012equivalencies}
Cyrus Samii and Peter~M Aronow.
\newblock On equivalencies between design-based and regression-based variance
  estimators for randomized experiments.
\newblock {\em Statistics \& Probability Letters}, 82(2):365--370, 2012.

\bibitem[Sat46]{satterthwaite1946approximate}
Franklin~E Satterthwaite.
\newblock An approximate distribution of estimates of variance components.
\newblock {\em Biometrics bulletin}, 2(6):110--114, 1946.

\bibitem[SNDS90]{splawa1990application}
Jerzy Splawa-Neyman, Dorota~M Dabrowska, and TP~Speed.
\newblock On the application of probability theory to agricultural experiments.
  essay on principles. section 9.
\newblock {\em Statistical Science}, pages 465--472, 1923 [1990].

\bibitem[TE93]{tibshirani1993introduction}
Robert~J Tibshirani and Bradley Efron.
\newblock An introduction to the bootstrap.
\newblock {\em Monographs on statistics and applied probability}, 57:1--436,
  1993.

\bibitem[WGB18]{wu2018loop}
Edward Wu and Johann~A Gagnon-Bartsch.
\newblock The loop estimator: Adjusting for covariates in randomized
  experiments.
\newblock {\em Evaluation review}, 42(4):458--488, 2018.

\bibitem[Whi80]{white1980heteroskedasticity}
Halbert White.
\newblock A heteroskedasticity-consistent covariance matrix estimator and a
  direct test for heteroskedasticity.
\newblock {\em Econometrica: journal of the Econometric Society}, pages
  817--838, 1980.

\end{thebibliography}

\end{document}